\DeclareSymbolFont{rsfs}{U}{rsfs}{m}{n}
\DeclareSymbolFontAlphabet{\mathscrsfs}{rsfs}
\theoremstyle{plain}
\newtheorem{theorem}{Theorem}
\newtheorem{lemma}{Lemma}[section]
\newtheorem{example}{Example}[section]
\newtheorem{remark}{Remark}[section]
\let\oldremark\remark
\renewcommand{\remark}{\oldremark\normalfont}
\def\bvarphi{\boldsymbol{\varphi}}
\def\TV{\mathsf{TV}}
\def\<{\langle}
\def\>{\rangle}
\def\dd{\mathrm{d}}
\def\top{\intercal}
\def\op{\mbox{\rm \tiny op}}
\def\cS{\mathcal{S}}
\def\cA{\mathcal{A}}
\def\Lamp[#1]{\boldsymbol{\Lambda}_{\mathrm{AMP}}^{(#1)}}
\def\lalg[#1]{\Lambda_{\mathrm{alg}, #1}}
\def\de{{\rm d}}
\def\EE{\mathbb{E}}
\def\NN{\mathbb{N}}
\def\RR{\mathbb{R}}
\def\bA{\mathbf{A}}
\def\bM{\mathbf{M}}
\def\bX{\mathbf{X}}
\def\ba{\boldsymbol{a}}
\def\bg{\boldsymbol{g}}
\def\bh{\boldsymbol{h}}
\def\bv{\boldsymbol{v}}
\def\bx{\boldsymbol{x}}
\def\by{\boldsymbol{y}}
\def\bA{\boldsymbol{A}}
\def\bM{\boldsymbol{M}}
\def\bX{\boldsymbol{X}}
\def\normal{{\mathsf{N}}}
\def\btheta{\boldsymbol{\theta}}
\def\bSigma{\boldsymbol{\Sigma}}
\def\bgamma{\boldsymbol{\gamma}}
\def\bOmega{\boldsymbol{\Omega}}
\def\bepsilon{\boldsymbol{\varepsilon}}
\def\beps{\boldsymbol{\varepsilon}}
\def\brho{\boldsymbol{\rho}}
\def\id{{\boldsymbol I}}
\def\sT{{\sf T}}
\renewcommand{\P}{\mathbb{P}}
\newcommand{\E}{\mathbb{E}}
\newcommand{\R}{\mathbb{R}}
\newcommand{\eps}{\varepsilon}
\newcommand{\Var}{\operatorname{Var}}
\newcommand{\Cov}{\operatorname{Cov}}
\newcommand{\diag}{\operatorname{diag}}
\newcommand{\Unif}{\operatorname{Unif}}
\newcommand{\Laplace}{\operatorname{Laplace}}
\newcommand{\RN}[1]{%
  \textup{\uppercase\expandafter{\romannumeral#1}}%
}
\newcommand\iidsim{\sim_{i.i.d.}}
\newcommand{\RNum}[1]{\uppercase\expandafter{\romannumeral #1\relax}}
\newcommand*{\rom}[1]{\expandafter\@slowromancap\romannumeral #1@}
\title{Provably Efficient Posterior Sampling for Sparse Linear Regression via
Measure Decomposition}
\author{
	Andrea Montanari\thanks{Department of Statistics and Department of Mathematics, Stanford University} 
	\and 
	Yuchen Wu\thanks{Department of Statistics and Data Science, Wharton School, University of Pennsylvania}
}
\date{\today}
\begin{document}

\maketitle

\begin{abstract}
We consider the problem of sampling from the posterior distribution of a 
$d$-dimensional coefficient vector $\btheta$, given linear observations 
$\by = \bX\btheta+\beps$. 
In general, such posteriors are multimodal, and therefore challenging to sample from.
This observation has prompted the exploration of various heuristics that aim
at approximating the posterior distribution.

In this paper, we study a different approach based on decomposing 
the posterior distribution into a log-concave mixture of simple product measures. 
This decomposition allows us to reduce sampling from a multimodal distribution of
interest to sampling from a log-concave one, which is tractable and has been
investigated in detail.
We prove that, under mild conditions on the prior, for random designs,
such measure decomposition 
is generally feasible when the number of samples per parameter $n/d$
exceeds a constant threshold. 
We thus obtain a provably efficient (polynomial time) sampling algorithm 
in a regime where this was previously not known. 
Numerical simulations confirm that the algorithm is practical, and reveal that it has
attractive statistical properties compared to state-of-the-art methods.
\end{abstract}

\tableofcontents

\section{Introduction}
\label{sec:introduction}

We consider a standard linear model 
\begin{align}
\label{eq:model-LR}
	\by = \bX \btheta + \beps,
\end{align}
where $\bX = [\bx_1|\cdots|\bx_n]^{\sT} \in \RR^{n \times d}$ denotes the design matrix, 
$\by = (y_1, \cdots, y_n)^{\top} \in \RR^n$ is the response vector,
 $\beps = (\eps_1, \cdots, \eps_n)^{\top} \sim \normal(\mathbf{0}_n, \sigma_d^2\id_n)$ is the
  noise vector, and $\btheta \in \RR^d$ denotes the hidden coefficients.
We investigate model \eqref{eq:model-LR} under a high-dimensional and sparse setup where 
the number of model parameters $d$ is comparable to or even larger than the sample size $n$, 
and a substantial proportion of the entries of the coefficient vector $\btheta$ are zero. 
Observing the pair $(\by, \bX)$, our objective is to conduct inference on $\btheta$. 
This high-dimensional regression problem has been widely studied both within
the Bayesian and the frequentist communities
   \cite{mitchell1988bayesian,george1993variable,tibshirani1996regression,efron2004least,
   mj2009sharp,narisetty2014bayesian,rovckova2018spike,bertsimas2020sparse}.

In a Bayesian approach, we endow the coefficient vector $\btheta$ with a prior distribution 
 $\pi$ over $\RR^d$, then the posterior distribution upon observing $(\by, \bX)$ takes the form 
\begin{align}
	\pi(\de \btheta \mid \by, \bX) = \frac{1}{Z_0(\by, \bX)} \exp\left( - \frac{1}{2\sigma_d^2} \btheta^{\top }\bX^{\top} \bX \btheta + \frac{1}{\sigma_d^2}\btheta^{\top}\bX^{\top} \by 
	\right) \, \pi(\de \btheta)\, ,\label{eq:FirstPosterior}
\end{align}
where $Z_0(\by, \bX)$ is a normalizing constant that is a function of $(\by, \bX)$.
In order to establish uncertainty quantification for $\btheta$, Bayesian methods require an 
algorithm that efficiently draws samples from the posterior distribution \eqref{eq:FirstPosterior}.
It is worth noting that a separate line of work 
focuses instead on computing the posterior mode \cite{rovckova2014emvs,rovckova2018spike}.
This is also known as  maximum a posteriori estimation. 
However, mode detection does not provide --in general-- a method for uncertainty quantification, 
and posterior sampling is generally regarded as a more challenging task.

Bayesian regression has demonstrated state-of-the-art performance across many application domains  \cite{tipping2001sparse,guan2011bayesian,ikehata2014photometric,wang2019robust}.
It also enjoys broad popularity 
for solving linear inverse problems in a variety of scientific fields,
ranging from geology to medical imaging \cite{stuart2010inverse,nickl2023bayesian}.
Among the various options to perform sparse regression within the Bayesian framework, 
methods based on the \emph{spike-and-slab prior} are the default choice \cite{bai2021spike}.
The spike-and-slab prior was first proposed in \cite{george1997approaches}, and has since served as an important building block in Bayesian statistics. 
For readers' convenience, we present a brief overview of the spike-and-slab prior in
 \cref{sec:spike-and-slab}, 
and discuss several prominent sampling algorithms associated with it in  \cref{sec:prior-arts}. 

Despite the continued progress in developing posterior sampling algorithms,
the accompanying theoretical guarantees are less satisfactory.  
\begin{figure}
	\centering
	\includegraphics[width=\textwidth]{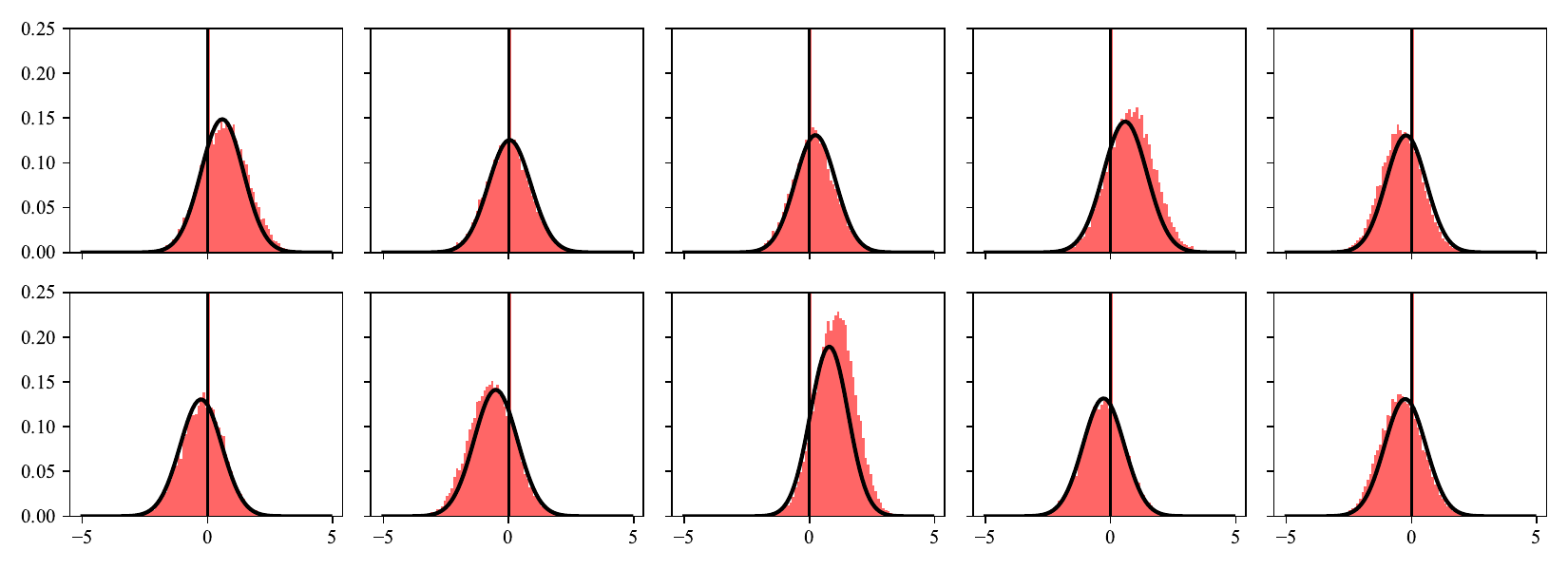}
	\caption{True and approximated posterior distributions produced by the proposed two-stage sampling algorithm. In this figure, we set $q = 0.3$, $\mu = \normal(0, 1)$ and $\sigma_d = 1$.
	We take $n = 20$ and $d = 10$. We generate the design matrix $\bX$ randomly via $X_{ij} \iidsim \normal(0, 1 / 4d)$. 
	We sample $\by = \bX \btheta + \beps$, where $\theta_i \iidsim q\, \delta_0 + (1 - q) \mu$ and $\beps \sim \normal(\mathbf{0}_n, \sigma_d^2\id_n)$. 
	We fix $(\bX, \by)$ after they are generated, and aim to sample from the associated posterior distribution. 
	In the above figure, the red bins represent the empirical sample distribution, and the
	 black line indicates the true posterior.
	 Different subplots present the empirical sample distributions and the true posteriors for $d$ different coordinates. 
	}
	\label{fig:shape}
\end{figure}
A line of research investigates posterior contraction properties
in the sparse high-dimensional regime  under frequentist assumptions on the data distributions
\cite{castillo2015bayesian,rovckova2018spike,shin2022neuronized}.
While these works support the use of Bayesian regression
methods, they do not provide algorithms to sample from the target posterior. 

A separate line of research focuses on designing and analyzing Markov Chain Monte Carlo (MCMC)
 algorithms for posterior sampling \cite{belloni2009computational,richardson2010bayesian,
 schreck2015shrinkage,yang2016computational}.
However, theoretical analysis of MCMC mixing time is notoriously challenging. 
Existing theoretical guarantees only apply to regimes in which statistical 
uncertainty is small and the posterior has a simple structure.
For instance,  \cite{belloni2009computational} proves mixing when the dimension
$d$ grows moderately as compared to the sample size $n$, and the posterior is approximately normal.
The high-dimensional case is covered in \cite{yang2016computational},
which requires however irrepresentability-type conditions on the design matrix.
Under these conditions, the posterior concentrates around vectors with a fixed set of non-zeros,
and (because of the structure of the prior) is approximately normal.
In general, constructing Markov chains that enjoy fast mixing properties is elusive 
even under simple statistical models, let alone having quantitative control of the mixing time.

Variational inference approaches provide another useful toolkit for Bayesian inference
\cite{jordan1999introduction,wainwright2008graphical,blei2017variational}. These methods replace the actual posterior by its closest approximation
within a specific parametric family, thus effectively replacing sampling with optimization.
Normally, the approximating family consists of product measures,
an ansatz known as `naive mean field.' While positive guarantees have been established for
naive mean field in certain settings \cite{ray2022variational,mukherjee2022variational},
in general the variational inference approach incurs uncontrolled approximation errors. 
For instance, \cite{ghorbani2019instability} proves that ---in a simple high-dimensional problem--- the posterior mean 
computed by naive mean field can be arbitrarily wrong, even when the prior takes a simple product form.

In this paper, we propose a new class of sampling algorithms for Bayesian linear regression,
which are constructed by decomposing the target posterior into a mixture of product measures.
We prove that, for a broad class of priors, and for isotropic random designs,
the mixture distribution can be sampled efficiently, provided that the number of samples
per parameter $n/d$ is larger than a constant threshold. 
This theoretical guarantee covers a regime in which (under the posterior) the support of
the coefficient vector is non-deterministic, hence opening the way to uncertainty quantification
for the support.
As a consequence, we obtain an efficient sampling algorithm in a regime in which no comparable 
results exist.

Our proposal is directly inspired by  recent advances in probability theory  that develop new techniques to bound the log-Sobolev constant of spin models \cite{bauerschmidt2019very,eldan2022spectral}.  
Specifically, the approach from \cite{bauerschmidt2019very,eldan2022spectral} enables us to analyze various properties of non-log-concave measures by decomposing them into mixtures of simpler ones. 
Our goal is to develop and study the algorithmic versions of these ideas. 

As shown by numerical studies in \cref{sec:experiments}, 
our approach is simple, effective, and compatible with any black-box sampling algorithm that is able to sample from log-concave distributions.  
As a preview of our results,  
Figure \ref{fig:shape} presents the empirical distributions associated with individual coefficients produced by our sampling algorithm in a small-scale example,
where the true posterior can be computed exactly.
Comparing the empirical distributions with the true posteriors, we observe a close match.  

The remainder of the paper is structured as follows.
In \cref{sec:preliminaries}, we formulate the sampling problem and discuss the spike-and-slab prior 
along with its continuous relaxations. 
In \cref{sec:sampling-alg} we describe the sampling algorithm based  on
measure decomposition and state the theoretical guarantee for our proposal.
Finally, we present numerical experiments that support our findings in \cref{sec:experiments}.

\subsection{Notations}

For $n \in \NN_+$, we denote by $[n]$ the set that contains all positive integers from 1 to $n$. For $a, b \in \RR$, we denote by $a \vee b$ the maximum of $a$ and $b$.
For two distributions $\mu_1$, $\mu_2$ and a real number $q \in [0, 1]$, we use $q \mu_1 + (1 - q) \mu_2$ to denote the mixture of these two distributions with mixing probability $q$.
For a matrix $\bX$, we denote by $\|\bX\|_{\op}$ its operator norm, 
$\lambda_{\min}(\bX)$ its minimum eigenvalue, and $\lambda_{\max}(\bX)$ its
 maximum eigenvalue.  
We use $\|\bv\|$ to denote the Euclidean norm of a vector $\bv$, and use
$\TV(\mu, \nu)$ to denote the total variation distance between 
measures $\mu$ and $\nu$. 
For a random variable $X$, we use $\|X\|_{\psi_2}$ to denote its sub-Gaussian norm. See \cite[Section 2.5.2]{vershynin2018high} for a formal definition of sub-Gaussian norm.

\section{Preliminaries}
\label{sec:preliminaries}

The spike-and-slab prior has appealing statistical properties but simultaneously 
poses significant challenges to standard sampling algorithms. 
In this section, we provide background on the spike-and-slab prior. 
For clarity of exposition, we will focus on a simplified version of this prior, and we will discuss generalizations later. 
%

\subsection{The spike-and-slab prior and its continuous relaxation}
\label{sec:spike-and-slab}

Numerous priors have been proposed and analyzed in the literature.
These  typically take the form of a mixture of product distributions with few latent variables. 
Namely, the prior admits the following decomposition:
\begin{align}
\label{eq:prior-decomposition}
	\pi(\de\btheta) = \int \pi_0^{\otimes d}(\de\btheta|\rho)\, 
\pi_{\rho}(\de \rho). 
\end{align}
In the above display, $\rho$ represents a vector of latent variables, the size of which is typically small and independent of the problem scale. 
Given $\rho$, $\pi_0(\dd \theta \mid \rho)$ denotes a distribution over $\RR$, and we use $\pi_0^{\otimes d}(\de\btheta|\rho)$ to represent a product distribution over $\RR^d$ with coordinate-wise marginal distribution $\pi_0(\dd \theta \mid \rho)$. 
We list below several prominent examples of prior distributions that admit 
the representation  \eqref{eq:prior-decomposition}. 
In particular, we feature the {spike-and-slab priors} and their continuous relaxations.

\begin{example}[Spike-and-slab priors]
\label{example:spike-and-slab}
	The spike-and-slab prior was first proposed in \cite{george1993variable}, and usually takes the following form: 
	\begin{align}
	\label{eq:spike-and-slab}
	\begin{split}
		& \pi_0^{\otimes d}(\dd \btheta \mid \bgamma, \sigma^2) = \prod_{j = 1}^d \left[ (1 - \gamma_j) \delta_0 + \gamma_j \mu(\dd \theta_j \mid \sigma^2) \right], \\
		& \pi (\dd \bgamma \mid q) = \prod_{j = 1}^d q^{\gamma_j}(1 - q)^{1 - \gamma_j}, \qquad q \sim \pi_q(\dd q), \qquad \sigma^2 \sim \pi_{\sigma^2}(\dd \sigma^2). 
	\end{split}
	\end{align}
	In the above display, $\delta_0$ stands for a point mass distribution at zero, 
	$\mu(\dd \theta \mid \sigma^2)$ is a diffuse density that scales with $\sigma^2$, 
	$\bgamma \in \{0, 1\}^d$ is a binary vector, and $q \in (0, 1)$ is the mixing probability. 
	We further assume that $q$ and $\sigma^2$ follow prior distributions $\pi_q$ and $\pi_{\sigma^2}$, 
	respectively. 
	
We note that the above example fits in the general setting of Eq.~\eqref{eq:prior-decomposition}
	after we marginalize over the selection variables $\bgamma$.
In this case, the latent variables are $\rho = (q, \sigma^2)$, and $\pi_0(\dd \theta \mid \rho)$
	 is the probability distribution of $(1 - q) \delta_0 + q \mu(\dd \theta \mid \sigma^2)$
	  marginalizing over $q \sim \pi_{q}$ and $\sigma^2 \sim \pi_{\sigma^2}$.  
\end{example}

The point-mass spike-and-slab prior given in \cref{eq:spike-and-slab} is considered the theoretical gold standard for Bayesian variable selection \cite{johnstone2004needles,ishwaran2011consistency,castillo2012needles,polson2019bayesian}.
However, sampling from the corresponding posterior  can be computationally prohibitive due to the combinatorial nature of $\bgamma$. 
As an alternative, researchers have resorted to continuous relaxations of \eqref{eq:spike-and-slab}, which replaces $\delta_0$ 
with a density that is peaked at zero. 
\begin{example}[Continuous relaxations of the spike-and-slab priors]
\label{example:continuous-relaxation}
	We let:
	\begin{align}
	\label{eq:spike-and-slab-relax}
	\begin{split}
		& \pi_0^{\otimes d}(\dd \btheta \mid \bgamma, \sigma^2) = \prod_{j = 1}^d \left[ (1 - \gamma_j) \mu_0(\dd \theta_j \mid \sigma^2) + \gamma_j \mu_1(\dd \theta_j \mid \sigma^2) \right], \\
		& \pi (\dd \bgamma \mid q) = \prod_{j = 1}^d q^{\gamma_j}(1 - q)^{1 - \gamma_j}, \qquad q \sim \pi_q(\dd q), \qquad \sigma^2 \sim \pi_{\sigma^2}(\dd \sigma^2). 
	\end{split}
	\end{align}  
	Again, the above prior fits in the setting of Eq.~\eqref{eq:prior-decomposition}.
	As mentioned, in \eqref{eq:spike-and-slab-relax}, the point-mass distribution $\delta_0$ is
	 replaced by a continuous distribution $\mu_0$ that concentrates around 0. 
	 
	 	Among others, \cite{george1993variable} proposed  to use a Gaussian mixture prior 
 $\mu_0=\normal(0,\sigma_0^2)$ and $\mu_1= \normal(0,\sigma_1^2)$ with $\sigma_0\ll \sigma_1$;
More recently, \cite{rovckova2018spike} proposed the spike-and-slab LASSO prior  with
 $\mu_0={\sf Laplace}(\lambda_0)$ and $\mu_1= {\sf Laplace}(\lambda_1)$
 with $\lambda_0\gg \lambda_1$\footnote{Here, we assume ${\sf Laplace}(\lambda)$ has density 
 $\frac{\lambda}{2} \exp(-\lambda |x|)$ for $x \in \RR$. }, and established 
 minimax optimality for this proposal.
Note that the spike-and-slab LASSO prior simply reduces to the Lasso prior when equalizing 
$\lambda_1$ and $\lambda_0$. 
The examples mentioned here are within the broader family of global-local shrinkage priors. 
We refer interested readers to Table 2 in \cite{bhadra2019lasso} for a survey of these priors.
\end{example} 

Despite benefiting from the continuous relaxation, posterior sampling with spike-and-slab priors 
remains challenging and there is no algorithm with sampling guarantees in the noisy 
high-dimensional regime tackled by our work. 
We refer the readers to \cref{sec:prior-arts} for a discussion on several previous sampling algorithms in this direction.

\subsection{Problem formulation and challenges}

For the sake of simplicity, in this paper we restrict to prior distributions that 
take product forms, i.e., we assume $\rho \equiv 1$ in \cref{eq:prior-decomposition}. 
This can be equivalently viewed as fixing a value of the latent variable $\rho$ and sampling from 
the posterior distribution conditioning on $(\by, \bX, \rho)$ instead of $(\by, \bX)$. 
In order to sample from a hierarchical model with latent variables, we may resort 
to several strategies. A popular one is to use
 a Gibbs sampler that alternates between sampling from $\pi(\dd \btheta \mid \by, \bX, \rho)$ 
and $\pi(\dd \rho \mid \by, \bX, \btheta)$. 
We expect sampling from $\pi(\dd \rho \mid \by, \bX, \btheta)$ to be tractable, 
since $\rho$ is typically a low-dimensional vector.
An alternative would be to sweep over a grid of values of $\rho$ and use our algorithm to
estimate the posterior weights $\pi(\dd \rho \mid \by, \bX)$. 

We leave the question of sampling from the low-dimensional latent vector $\rho$
for future work, and instead focus on what we consider the crux of the problem,
namely, sampling from the posterior distribution associated with 
 a product prior. Equivalently, we wish to sample from $\pi(\dd \btheta \mid \by, \bX, \rho)$ 
 if we view $\pi(\dd \btheta \mid \rho)$ as the prior. 
To simplify things, throughout this work we drop $\rho$ since it is understood to be fixed.

We will focus on the point-mass spike-and-slab prior 
 defined in Example \ref{example:spike-and-slab}, but generalizations to other priors are immediate.
To be precise, we assume $\btheta$ has a product prior with marginal distribution 
\begin{align}
\label{eq:mixture-prior}
\pi_0(\de\theta) = (1-q)\, \delta_0 + q\, \mu(\de\theta). 
\end{align}
As we have mentioned, for prior distributions that admit form \eqref{eq:mixture-prior}, the 
associated posterior \eqref{eq:FirstPosterior}
is in general not log-concave, and standard sampling algorithms
(e.g., Langevin dynamics, Hamiltonian Monte Carlo, and their variants \cite{luengo2020survey})
come with no theoretical guarantees. 
Gibbs sampling can be attempted, but standard analysis methods
(e.g., those based on checking the Dobrushin condition \cite{dobruschin1968description}) only allow to establish fast mixing under very restrictive assumptions.

These limitations are compounded by the remark that, in general, 
sampling from the above Bayes posterior is NP-hard.
For instance, in the case with $\mu= \Unif([-M,M])$,
sampling from the target posterior is at least as hard as minimum cardinality regression (i.e., $\ell_0$-norm regularization), which is NP-hard  by
\cite{natarajan1995sparse}.

To summarize, the goal of this paper is to design an efficient sampling algorithm for the posterior distribution \eqref{eq:FirstPosterior}, within the framework of linear model \eqref{eq:model-LR} that has prior \eqref{eq:mixture-prior}. 




\section{Sampling based on measure decomposition}
\label{sec:sampling-alg}

We describe in this section our sampling algorithm. 
At a high level, we decompose the target posterior into a mixture of  product measures. 
To achieve this, we introduce an intermediate variable $\bvarphi \in \RR^d$, such that 
when conditioned on $(\bvarphi, \by, \bX)$, the variable $\btheta$ has a product 
conditional distribution.
Furthermore, we prove that under certain conditions, $\bvarphi$ has a
 log-concave density, hence is amenable to efficient sampling. 
 \cref{sec:log-concave} reviews algorithms that efficiently
 sample from log-concave distributions. 

\subsection{Measure decomposition}
\label{sec:measure-decomposition}

Let $\gamma$ be a positive constant, such that the matrix $\bA := \gamma \id_d - \sigma_d^{-2} \bX^{\top} \bX$ is strictly positive-semidefinite. 
Namely,  it suffices to take $\gamma > \sigma_d^{-2} \|\bX\|_{\op}^2$. 
The target posterior \eqref{eq:FirstPosterior} then takes the following form (recall that we assume $\pi(\dd \btheta) = \pi_0^{\otimes d}(\dd \btheta)$): 
\begin{align}
\label{eq:theta-posterior}
	\pi(\de \btheta \mid \by, \bX) = \frac{1}{Z_0(\by, \bX)} \exp\left( \langle \btheta, \bh \rangle + \frac{1}{2} \big\langle \btheta, \bA \btheta \big\rangle - \frac{\gamma}{2}\|\btheta\|^2 \right) \pi_0^{\otimes n} (\de \btheta), 
\end{align}
where $\bh = \sigma_d^{-2}\bX^{\top} \by \in \RR^d$.
 
\cref{lemma:marginal}  shows that density \eqref{eq:theta-posterior} corresponds to the 
marginal distribution for the first $d$ coordinates of a joint distribution over $(\btheta, \bvarphi) \in \RR^d \times \RR^d$.  
\begin{lemma}[Measure decomposition]
\label{lemma:marginal}
	Assume $\gamma > \sigma_d^{-2} \|\bX\|_{\op}^2$. Then distribution \eqref{eq:theta-posterior} is the marginal distribution for the first $d$ coordinates of the following joint distribution:
\begin{align}\label{eq:pi}
	\pi(\de \btheta, \de \bvarphi \mid \by, \bX) \propto \exp \left( \langle \bh + \bvarphi, \btheta \rangle - \frac{1}{2} \big\langle \bvarphi, \bA^{-1} \bvarphi \big \rangle - \frac{\gamma}{2} \|\btheta\|^2  \right)\pi_0^{\otimes n} (\de \btheta)\, \de \bvarphi. 
\end{align}
Here, $\de \bvarphi$ denotes the Lebesgue measure over $\RR^d$. 
\end{lemma}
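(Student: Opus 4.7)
The plan is to verify the claim directly by integrating out $\bvarphi$ from the joint density in \eqref{eq:pi} and checking that the resulting marginal for $\btheta$ is proportional to the posterior in \eqref{eq:theta-posterior}. Since the posterior \eqref{eq:theta-posterior} and the joint \eqref{eq:pi} are both defined only up to a normalizing constant, it suffices to match them up to a $\btheta$-independent multiplicative factor.

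The first step is to note that the dependence of the integrand in \eqref{eq:pi} on $\bvarphi$ is purely quadratic plus linear, and that the assumption $\gamma > \sigma_d^{-2}\|\bX\|_{\op}^2$ ensures $\bA = \gamma\id_d - \sigma_d^{-2}\bX^\top\bX$ is strictly positive definite, so $\bA^{-1}$ exists and is also positive definite. Thus for each fixed $\btheta$, the $\bvarphi$-integral is a well-defined Gaussian integral. I would then complete the square
\begin{align*}
\langle \bvarphi, \btheta\rangle - \frac{1}{2}\langle \bvarphi, \bA^{-1}\bvarphi\rangle
&= -\frac{1}{2}\langle \bvarphi - \bA\btheta,\, \bA^{-1}(\bvarphi - \bA\btheta)\rangle + \frac{1}{2}\langle \btheta, \bA\btheta\rangle,
\end{align*}
so that a change of variables $\bvarphi \mapsto \bvarphi + \bA\btheta$ reduces the integral to $\int \exp(-\tfrac{1}{2}\langle \bvarphi, \bA^{-1}\bvarphi\rangle)\, \de\bvarphi = (2\pi)^{d/2}\det(\bA)^{1/2}$, a constant independent of $\btheta$.

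Plugging this back into \eqref{eq:pi} and collecting the $\btheta$-dependent pieces yields
\[
\int \pi(\de\btheta, \de\bvarphi \mid \by, \bX)\big/\de\btheta \;\propto\; \exp\!\Big(\langle \bh, \btheta\rangle + \tfrac{1}{2}\langle\btheta, \bA\btheta\rangle - \tfrac{\gamma}{2}\|\btheta\|^2\Big)\, \pi_0^{\otimes n}(\de\btheta),
\]
which coincides with \eqref{eq:theta-posterior} up to the overall normalizing constant. Both sides then have total mass one, so the proportionality is in fact an equality, giving the claim.

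There is no real obstacle here; the only thing that needs any care is justifying the finiteness of the Gaussian integral, which is immediate from $\bA \succ 0$, and tracking that the $\det(\bA)^{1/2}$ prefactor and the $Z_0(\by,\bX)$ in \eqref{eq:theta-posterior} are both $\btheta$-independent so they drop out under the normalization of the marginal.
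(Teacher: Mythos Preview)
Your proposal is correct and follows essentially the same approach as the paper: integrate out $\bvarphi$ via the Gaussian integral (which is finite and well-defined because $\bA\succ 0$ under the hypothesis $\gamma>\sigma_d^{-2}\|\bX\|_{\op}^2$), observe that the $\btheta$-independent constant drops out under normalization, and match the result with \eqref{eq:theta-posterior}. The paper's version is terser---it simply states the outcome of the integration without writing out the completion of the square---but the argument is the same.
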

\begin{proof}[Proof of \cref{lemma:marginal}]
	
	Integrating the quantity on the right hand side of \cref{eq:pi} over $\bvarphi$, we see that 
	\begin{align*}
		& \int_{\RR^d} \exp \left( \langle \bh + \bvarphi, \btheta \rangle - \frac{1}{2} \big\langle \bvarphi, \bA^{-1} \bvarphi \big \rangle - \frac{\gamma}{2} \|\btheta\|^2  \right)\pi_0^{\otimes n} (\de \btheta)\, \de \bvarphi \\
		= & \, C_{\bA} \exp \left( \langle \bh, \btheta \rangle + \frac{1}{2} \langle \btheta, \bA \btheta \rangle - \frac{\gamma}{2} \|\btheta\|^2 \right) \pi_0^{\otimes n} (\de \btheta), 
	\end{align*}
	where $C_{\bA}$ is a constant that depends only on $\bA$. The second line above coincides with the right hand side of \cref{eq:theta-posterior} up to a normalizing constant, thus concluding the proof. 
	
\end{proof}
Equation \eqref{eq:pi} characterizes the joint distribution of $(\btheta, \bvarphi)$. 
Using the density function written there, we conclude that the marginal distribution of $\bvarphi$ takes the form
\begin{align*}
	\pi(\de \bvarphi \mid \by, \bX) \propto e^{-H(\bvarphi)} \de \bvarphi, 
\end{align*}
where
\begin{align*}
	& H(\bvarphi) := \frac{1}{2} \langle \bvarphi, \bA^{-1} \bvarphi  \rangle + \sum_{i = 1}^d V_{\gamma} (h_i + \varphi_i), \\
	& V_\gamma (x) := -\log\left\{ \int_{\RR} e^{x \theta - \frac{\gamma}{2} \theta^2}  \pi_0(\de \theta) \right\}. 
\end{align*}
As we will see in \cref{lemma:V-gamma-general}, the second derivative of $V_{\gamma}$ is non-positive.

The conditional distribution of $\btheta$ conditioning on $(\bvarphi, \by, \bX)$ then admits a product form: 
\begin{align}
\label{eq:conditional-product}
	\pi(\de \btheta \mid \bvarphi, \by, \bX) \propto \prod_{i = 1}^d e^{(h_i + \varphi_i)\theta_i - \frac{\gamma}{2}\theta_i^2} \pi_0(\de \theta_i). 
\end{align}

\begin{remark}
The decomposition in Lemma \ref{lemma:marginal} has been used for a long time in statistical
physics \cite{baker1962certain,hubbard1972critical} to study spin models which are  
probability measures of the form \eqref{eq:theta-posterior}.
To the best of our knowledge, \cite{bauerschmidt2019very} first noticed that the shift term $\gamma\id_d$
can be exploited to simplify the structure of the marginal distribution of $\bvarphi$.
\end{remark}

\subsection{Two-stage sampling algorithm}
\label{sec:two-stage}

The measure decomposition presented in \cref{sec:measure-decomposition} 
suggests the following two-stage algorithm: 
\begin{enumerate}
	\item First, we sample $\bvarphi \sim \pi(\de \bvarphi \mid \by, \bX)$. We denote by $\cA_1$ the sampling algorithm for this. 
	\item Given $\bvarphi$, we sample $\btheta$ from the corresponding conditional distribution $\btheta \sim \pi(\de \btheta \mid \bvarphi, \by, \bX)$. The algorithm used to sample in this step is denoted by $\cA_2$.  
\end{enumerate}
We note that Step 2 of the above procedure in general can be implemented efficiently, as the conditional distribution takes a product form and each component is simply a tilted version of the prior distribution $\pi_0$. 
As for step 1, a sufficient condition under which this can be efficiently implemented is that
$H(\bvarphi)$ is strongly convex.
In this case, we can leverage the rich and rapidly growing literature on sampling
log-concave distributions to construct $\cA_2$, 
see \cref{sec:log-concave} for background. 
In this case, we say that the sampling problem is \emph{feasible}. 
Inspecting the Hessian of $H(\bvarphi)$, we conclude that sampling is feasible if and only if there exists $\gamma > \sigma_d^{-2} \|\bX\|_{\op}^2$, such that 
\begin{align}
\label{eq:def-feasible}
	\frac{1}{\gamma - \lambda_{\min}(\sigma_d^{-2} \bX^{\top} \bX)} + \inf_{x \in \RR} V_{\gamma}''(x)  > 0. 
\end{align}
We note that condition \eqref{eq:def-feasible} is straightforward to verify given an estimate of the noise level, and is independent of the response $\by$. 
We next present a sufficient condition for the convexity of $H(\bvarphi)$. 
To this end, we establish the following lemma.  
\begin{lemma}\label{lemma:V-gamma-general}
Recall that $\mu$ is the diffuse density and $q$ is the mixing probability, both given in \cref{eq:mixture-prior}. 
Assume that $\mu$ has a log-concave density $f_{\mu}$ that is symmetric about the origin.
Further assume that there exist $c_1$, $c_2 \in \RR_{> 0}$ and $k \in \NN_+$ that depend only on $\mu$, such that $f_{\mu} (x) \geq c_1 e^{-c_2 x^{2k}}$ for all $x \in \RR$. Then, there exists a constant $C_0 > 0$ that depends only on $(q, \mu)$, such that 
	\begin{align}
	\label{eq:supV''}
		\inf_{x \in \RR}V_{\gamma}''(x) \geq -C_0(\gamma^{-1} + \gamma^{-2}) \cdot (1 + \log (\gamma + 1))^{\frac{2k - 1}{k}}.
	\end{align}
	In addition, $V_{\gamma}''(x) \leq 0$ for all $x \in \RR$. 
\end{lemma}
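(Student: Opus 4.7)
My plan is to reduce the bound on $V_\gamma''(x)$ to a variance bound on a tilted probability measure, then split that measure into its point-mass and log-concave components and control each separately. Let $Z(x) = \int e^{x\theta - \gamma\theta^2/2}\pi_0(\de\theta)$ and $\nu_x(\de\theta) \propto e^{x\theta - \gamma\theta^2/2}\pi_0(\de\theta)$, so that $V_\gamma(x) = -\log Z(x)$. A direct MGF-style differentiation gives $V_\gamma'(x) = -\EE_{\nu_x}[\theta]$ and $V_\gamma''(x) = -\Var_{\nu_x}(\theta)$, which immediately yields $V_\gamma''(x) \leq 0$. The remainder of the argument produces the quantitative upper bound on $\Var_{\nu_x}(\theta)$.

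Decompose $\nu_x = (1 - p_x)\delta_0 + p_x\, \nu_x^{(1)}$, where $\nu_x^{(1)}(\de\theta) \propto e^{x\theta - \gamma\theta^2/2}f_\mu(\theta)\de\theta$, $Z_1(x) = \int e^{x\theta-\gamma\theta^2/2}f_\mu(\theta)\de\theta$, and $p_x = qZ_1(x)/((1-q)+qZ_1(x))$. A short calculation then yields
\[
\Var_{\nu_x}(\theta) \;=\; p_x \Var_{\nu_x^{(1)}}(\theta) \;+\; p_x(1-p_x)\, m_x^2, \qquad m_x := \EE_{\nu_x^{(1)}}[\theta].
\]
Since $f_\mu$ is log-concave and the Gaussian tilt is $\gamma$-strongly log-concave, $\nu_x^{(1)}$ is $\gamma$-strongly log-concave; the Brascamp--Lieb inequality then gives $\Var_{\nu_x^{(1)}}(\theta) \leq 1/\gamma$, which handles the first summand and produces the $\gamma^{-1}$ part of the bound.

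For the cross term $p_x(1-p_x)m_x^2$, I would first bound the mean $|m_x| \leq |x|/\gamma$ via the Stein/integration-by-parts identity $\gamma m_x = x + \EE_{\nu_x^{(1)}}[(\log f_\mu)'(\theta)]$: pairing $\theta$ with $-\theta$ in the expectation and using the odd/monotone structure of $(\log f_\mu)'$ (from symmetry and log-concavity of $f_\mu$) shows the correction term has the opposite sign of $x$. Next, I use the tail hypothesis $f_\mu(\theta) \geq c_1 e^{-c_2\theta^{2k}}$ to extract a lower bound $Z_1(x) \geq c\,\gamma^{-1/2}\exp(x^2/(2\gamma) - c_2 R(x)^{2k})$, obtained by restricting the integral defining $Z_1$ to an interval of width $R(x) \leq |x|/\gamma + O(\gamma^{-1/2})$ around the Gaussian peak $\theta = x/\gamma$. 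Combining this with $1 - p_x \leq (1-q)/(qZ_1(x))$ yields $p_x(1-p_x)m_x^2 \leq C x^2\gamma^{-3/2}\exp(-x^2/(2\gamma) + c_2 R(x)^{2k})$.

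The final step is a threshold argument. For $|x| \leq x_0$, use the elementary estimate $p_x(1-p_x)m_x^2 \leq m_x^2 \leq x_0^2/\gamma^2$; for $|x| > x_0$, use the $Z_1$-based bound above. Choosing $x_0$ on the scale $\sqrt{\gamma L}$ with $L = 1 + \log(\gamma+1)$ balances the two estimates, and tracking the contribution of $R(x)^{2k}$ at the optimizer produces the claimed $L^{(2k-1)/k}$ factor, yielding a total bound of $C(\gamma^{-1} + \gamma^{-2})L^{(2k-1)/k}$. I expect the main obstacle to be the quantitative Laplace/Gaussian lower bound on $Z_1(x)$ with explicit dependence on $k$, together with the precise optimization in the threshold argument: the $(2k-1)/k$ exponent reflects a delicate trade-off between the Gaussian factor $e^{-x^2/(2\gamma)}$ and the degree-$2k$ tail correction $e^{c_2(|x|/\gamma)^{2k}}$ inherited from the lower bound on $f_\mu$.
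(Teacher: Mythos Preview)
Your overall architecture matches the paper's: the variance decomposition
\[
-V_\gamma''(x)=\Var_{\nu_x}(\theta)=p_x\Var_{\nu_x^{(1)}}(\theta)+p_x(1-p_x)m_x^2,
\]
the Brascamp--Lieb bound $\Var_{\nu_x^{(1)}}(\theta)\le 1/\gamma$, the mean bound $|m_x|\le |x|/\gamma$, and a threshold argument to control $p_x(1-p_x)x^2/\gamma^2$ are exactly what the paper does. Your Stein-identity derivation of $|m_x|\le |x|/\gamma$ is a legitimate alternative to the paper's route, which simply integrates the derivative identity $(g'/g)'=\Var_{\nu_x^{(1)}}(\theta)\le 1/\gamma$ from $x=0$ (where $m_0=0$ by symmetry).

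There is, however, a genuine gap in your lower bound on $Z_1(x)$ when $k\ge 2$. Restricting the integral to a window of width $O(\gamma^{-1/2})$ around the \emph{Gaussian} peak $\theta=x/\gamma$ only yields
\[
Z_1(x)\gtrsim \gamma^{-1/2}\exp\Big(\tfrac{x^2}{2\gamma}-c_2(|x|/\gamma)^{2k}\Big),
\]
and for $k\ge 2$ the exponent becomes negative once $|x|$ is large enough (roughly $|x|\gtrsim \gamma^{(2k-1)/(2k-2)}$ when $c_2>1/2$). At such $x$ your $Z_1$-based bound on $p_x(1-p_x)m_x^2$ blows up, while the trivial bound $x^2/\gamma^2$ is also large, so the threshold argument as stated does not close. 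Relatedly, with the threshold $x_0\sim\sqrt{\gamma L}$ the trivial bound gives $x_0^2/\gamma^2=L/\gamma$, not $L^{(2k-1)/k}/\gamma$, so the mechanism you describe does not actually produce the claimed exponent.

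The paper avoids this by lower-bounding $g(x)=Z_1(x)$ differently: it applies Young's inequality $\gamma\theta^2/2\le \gamma^k\theta^{2k}/(2k)+(k-1)/(2k)$ to merge the Gaussian and the tail term $e^{-c_2\theta^{2k}}$, then substitutes $\theta=x^{1/(2k-1)}u$ to obtain
\[
g(x)\gtrsim \exp\Big(c\,x^{2k/(2k-1)}\big/(c_2+\gamma^k/2k)^{1/(2k-1)}\Big),
\]
which grows for \emph{all} $x$. This forces $p_x(1-p_x)$ to decay fast enough uniformly, and the threshold becomes $|x_0|\asymp \gamma^{1/2}(\log\gamma)^{(2k-1)/(2k)}$; plugging this into $x_0^2/\gamma^2$ is precisely what produces the $(1+\log(\gamma+1))^{(2k-1)/k}$ factor. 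In short, your Laplace step localizes at the wrong point for large $|x|$; either shift the localization to the true maximizer of $x\theta-\gamma\theta^2/2-c_2\theta^{2k}$, or use the Young-inequality trick as in the paper.
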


\begin{remark}
\label{rmk:mean-zero-and-log-concave}
	The assumption that $\mu$ is mean-zero and log-concave is a common characteristic of many widely used distributions. 
	To name a few, see \cite{mitchell1988bayesian,rovckova2018bayesian,polson2019bayesian}. 
\end{remark}

\begin{proof}[Proof of \cref{lemma:V-gamma-general}]
	We defer the proof of  \cref{lemma:V-gamma-general} to Appendix \ref{sec:proof-lemma:V-gamma-general}.
\end{proof}

\cref{lemma:V-gamma-general} lower bounds the second derivative of $V_{\gamma}$. 
We can then leverage this lemma to lower bound the eigenvalues of $\nabla^2 H(\bvarphi)$. 
More precisely, under the conditions of \cref{lemma:V-gamma-general},
\begin{align*}
	\lambda_{\min} (\nabla^2 H(\bvarphi)) = & \lambda_{\min} \Big( \bA^{-1} + \diag(\{V_{\gamma}''(h_i + \varphi_i)\}_{i \in [d]}) \Big) \\
	\geq & \frac{1}{\gamma - \lambda_{\min} (\sigma_d^{-2} \bX^{\top} \bX)} - C_0(\gamma^{-1} + \gamma^{-2}) \cdot (1 + \log (\gamma + 1))^{\frac{2k - 1}{k}}.  
\end{align*}
As a consequence, we obtain that $H(\bvarphi)$ is strongly convex when 
\begin{align}
\label{eq:convex-condition}
	\frac{1}{\gamma - \lambda_{\min} (\sigma_d^{-2} \bX^{\top} \bX)} > C_0(\gamma^{-1} + \gamma^{-2}) \cdot (1 + \log (\gamma + 1))^{\frac{2k - 1}{k}}. 
\end{align}
\cref{eq:convex-condition} provides a sufficient condition that ensures the log-concavity of $\bvarphi$. 

In the following, we refer to the sampling problem as {feasible} if 
there exists $\gamma > \|\bX^{\top} \bX / \sigma_d^2\|_{\op}$, such that $\bvarphi$ under 
this choice of $\gamma$ has log-concave marginal density. 
When this happens, the proposed two-stage sampling algorithm has provable guarantees.
In the next section, we study  the feasible
 region for random design matrices.

To conclude this section, we demonstrate that if both $\cA_1$ and $\cA_2$ achieve high accuracy in their respective tasks, 
then combining them leads to a two-stage sampling algorithm with overall high accuracy. 

\begin{theorem}
\label{thm:combine}
Denote by $\hat\pi (\dd \bvarphi \mid \by, \bX)$ the output distribution of $\cA_1$ given $(\by, \bX)$, and denote by $\hat\pi (\dd \btheta \mid \bvarphi, \by, \bX)$ the output distribution of $\cA_2$ given $(\by, \bX, \bvarphi)$. 
	Assume that 
	\begin{align*}
		& \TV\left(\pi (\dd \bvarphi \mid \by, \bX), \hat\pi (\dd \bvarphi \mid \by, \bX)\right) \leq \eps_1, \\
		& \TV\left(\pi (\dd \btheta \mid \bvarphi, \by, \bX), \hat\pi (\dd \btheta \mid \bvarphi, \by, \bX) \right) \leq \eps_2, \qquad \forall \bvarphi \in \RR^d. 
	\end{align*}
	Let $\hat\pi(\dd \btheta \mid \by, \bX)$ be the output distribution of the proposed two-stage sampling algorithm. Then, it holds that 
	\begin{align*}
		\TV\left(\pi(\dd \btheta \mid \by, \bX), \hat\pi(\dd \btheta \mid \by, \bX) \right) \leq \eps_1 + \eps_2 + \eps_1 \eps_2. 
	\end{align*}
\end{theorem}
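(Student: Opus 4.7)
The plan is to reduce the claim to two triangle-inequality steps by passing through an intermediate distribution that shares the true conditional kernel with $\pi$ but the approximate marginal with $\hat\pi$. Concretely, I would define
\begin{align*}
\tilde\pi(\dd \btheta \mid \by, \bX) \; := \; \int_{\RR^d} \pi(\dd \btheta \mid \bvarphi, \by, \bX)\, \hat\pi(\dd \bvarphi \mid \by, \bX),
\end{align*}
and write
\begin{align*}
\TV\bigl(\pi(\dd\btheta \mid \by, \bX),\, \hat\pi(\dd\btheta \mid \by, \bX)\bigr) \;\leq\; \TV\bigl(\pi,\, \tilde\pi\bigr) + \TV\bigl(\tilde\pi,\, \hat\pi\bigr),
\end{align*}
where both $\pi$ and $\hat\pi$ on the right are understood as the $\btheta$-marginals conditional on $(\by,\bX)$. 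The first step I would take is to verify, using \cref{lemma:marginal} together with the chain rule for conditional measures, that the true marginal admits the mixture representation $\pi(\dd\btheta \mid \by, \bX) = \int \pi(\dd\btheta \mid \bvarphi, \by, \bX)\, \pi(\dd\bvarphi \mid \by, \bX)$; this is what licenses the intermediate $\tilde\pi$ above.

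Next I would bound each of the two pieces. For $\TV(\pi, \tilde\pi)$, both measures are obtained by pushing $\pi(\dd\bvarphi \mid \by, \bX)$ and $\hat\pi(\dd\bvarphi \mid \by, \bX)$ through the common Markov kernel $\bvarphi \mapsto \pi(\dd\btheta \mid \bvarphi, \by, \bX)$, so the data-processing inequality for total variation yields $\TV(\pi, \tilde\pi) \leq \TV(\pi(\dd\bvarphi \mid \by, \bX), \hat\pi(\dd\bvarphi \mid \by, \bX)) \leq \eps_1$. For $\TV(\tilde\pi, \hat\pi)$, both measures integrate against the same outer measure $\hat\pi(\dd\bvarphi \mid \by, \bX)$, so the standard identity for TV of mixtures gives
\begin{align*}
\TV(\tilde\pi, \hat\pi) \;\leq\; \int \TV\bigl(\pi(\dd\btheta \mid \bvarphi, \by, \bX),\, \hat\pi(\dd\btheta \mid \bvarphi, \by, \bX)\bigr)\, \hat\pi(\dd\bvarphi \mid \by, \bX) \;\leq\; \eps_2.
\end{align*}
Adding the two pieces gives $\eps_1 + \eps_2$, which is actually stronger than the claimed $\eps_1 + \eps_2 + \eps_1\eps_2$, so the theorem follows a fortiori.

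If one prefers a direct coupling proof that produces the precise expression in the statement, I would alternatively construct a maximal coupling of $(\bvarphi, \hat\bvarphi)$ between $\pi(\dd\bvarphi \mid \by, \bX)$ and $\hat\pi(\dd\bvarphi \mid \by, \bX)$, and then on the event $\{\bvarphi = \hat\bvarphi\}$ draw $(\btheta, \hat\btheta)$ from a maximal coupling of the two conditionals; a union-bound computation on the disagreement events yields precisely $\eps_1 + \eps_2 + \eps_1\eps_2$ once one does not exploit the containment of one event in the other. Neither step is a real obstacle: the only subtlety is to make sure that, when writing $\pi(\dd\btheta \mid \by, \bX)$ as an integral over $\bvarphi$, one uses the true regular conditional distribution guaranteed by \cref{lemma:marginal} rather than the approximating kernel produced by $\cA_2$, so that the data-processing step in the first piece genuinely applies.
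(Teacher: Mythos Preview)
Your proposal is correct. The paper's own proof is essentially your second, coupling-based alternative: it simply observes that one can couple the two sampling procedures so that the outputs agree with probability at least $(1-\eps_1)(1-\eps_2)$, which already yields $\TV \le \eps_1 + \eps_2 - \eps_1\eps_2$ (note the minus sign; your remark that the coupling computation gives $\eps_1 + \eps_2 + \eps_1\eps_2$ has a sign slip, though of course either quantity dominates the stated bound). Your primary route via the intermediate mixture $\tilde\pi$, data processing for the first piece, and the convexity bound for TV of mixtures for the second, is a slightly different and arguably cleaner argument that delivers the sharper $\eps_1 + \eps_2$ directly; it avoids constructing explicit couplings at the cost of invoking two standard TV inequalities. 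Both approaches are entirely adequate here, and your observation that the theorem's bound is loose is accurate.
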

\begin{proof}[Proof of \cref{thm:combine}]
	By assumption, we can couple random variables sampled from the two distributions 
	$\pi(\dd \btheta \mid \by, \bX)$ and $\hat\pi(\dd \btheta \mid \by, \bX)$, such 
	that they are equal with probability at least $(1 - \eps_1)(1 - \eps_2)$. 
\end{proof}


\subsection{The case of random designs}
\label{sec:asymptotic-feasibility}


In this section, we discuss the feasibility of measure decomposition in the context of 
 random design matrices. 
Specifically, we examine two situations in which feasibility (in the sense of \cref{eq:def-feasible})
holds with high probability when $n/d$ is larger than a suitable constant.

\paragraph{Isotropic sub-Gaussian rows.} 

In the first situation, we assume that the rows of $\bX$ are independent, isotropic, 
and sub-Gaussian random vectors in $\RR^d$. 
We state our result below and defer the proof to Appendix \ref{sec:proof-thm:feasible}. 
Note that since the norm of $\btheta$ scales like $\|\btheta\|_2\asymp \sqrt{d}$,
the assumption $c_1 > \sigma_d^2 / d > c_2$ amounts to requiring that the
signal-to-noise ratio (SNR) is of order one.
\begin{theorem}
	\label{thm:feasible}
	Denote by $\bx_1, \bx_2, \cdots, \bx_n \in \RR^d$ the rows of $\bX$
	and assume the following:  
	$(1)$ The random vectors $\bx_1, \bx_2, \cdots, \bx_n$ are independent. 
	$(2)$ The rows are isotropic, in the sense that $\EE[\bx_i \bx_i^{\top}] = \id_d$ for 
	all $i \in [n]$. 
	$(3)$ The rows are sub-Gaussian random vectors, with a uniformly upper bounded sub-Gaussian constant:
	 $\max_{i \in [n]} \|\bx_i\|_{\psi_2} = K  < \infty$. 
	$(4)$ There exist numerical constants $c_1, c_2 > 0$, such that $c_1 > \sigma_d^2 / d > c_2$. 
	$(5)$ The conditions of \cref{lemma:V-gamma-general} hold.
	
	Under these assumptions, there exists a constant $C_1 > 0$ that depends only on $(q, \mu)$, such that if the following two conditions are satisfied:  
	\begin{align*}
	\frac{n}{d} \geq 4C_1 K^4 , \qquad \frac{\sqrt{d / n}}{ 2K^2 } \geq C_1 (d / n + d^2 / n^2) \cdot (1 + \log (n / d + 1))^{\frac{2k - 1}{k}}, 
	\end{align*}
	then with probability $1 - 2\exp(-d)$ the sampling problem is feasible. 
	Namely, there exists $\gamma > \sigma_d^{-2} \|\bX\|_{\op}^2$ such that \cref{eq:def-feasible} holds. 
\end{theorem}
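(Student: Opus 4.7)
The plan is to combine a standard concentration inequality for sample covariance matrices with the upper bound on $V_{\gamma}''$ from Lemma \ref{lemma:V-gamma-general}, and verify the sufficient convexity condition \eqref{eq:convex-condition} at a single, well-chosen $\gamma$. First, invoking a classical concentration bound for the sample covariance of independent, isotropic, sub-Gaussian rows (for instance, Theorem 4.6.1 in Vershynin) with deviation parameter $u = \sqrt{d}$, and using the hypothesis $n/d \geq 4 C_1 K^4$ to absorb the lower-order $d/n$ term, one obtains
\[
\|\bX^\top\bX / n - \id_d\|_{\op} \leq C K^2 \sqrt{d/n}
\]
with probability at least $1 - 2\exp(-d)$, for an absolute constant $C > 0$. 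On this event, the eigenvalues of $\sigma_d^{-2}\bX^\top\bX$ satisfy
\begin{align*}
\lambda_{\max}(\sigma_d^{-2}\bX^\top\bX) \leq \frac{n}{\sigma_d^2}\bigl(1 + CK^2 \sqrt{d/n}\bigr), \qquad \lambda_{\min}(\sigma_d^{-2}\bX^\top\bX) \geq \frac{n}{\sigma_d^2}\bigl(1 - CK^2 \sqrt{d/n}\bigr).
\end{align*}

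Next, I would set $\gamma := (n/\sigma_d^2)\bigl(1 + 2CK^2 \sqrt{d/n}\bigr)$, which strictly exceeds the upper bound for $\lambda_{\max}(\sigma_d^{-2}\bX^\top\bX)$ (ensuring $\bA$ is strictly positive-definite, as required) and satisfies
\[
\gamma - \lambda_{\min}(\sigma_d^{-2}\bX^\top\bX) \leq 3CK^2 \frac{\sqrt{nd}}{\sigma_d^2}, \qquad \gamma \asymp \frac{n}{d},
\]
where the last asymptotic uses the SNR-type hypothesis $c_2 < \sigma_d^2 / d < c_1$. These bounds translate into $1/(\gamma - \lambda_{\min}(\sigma_d^{-2}\bX^\top\bX)) \geq (c_2 / (3 C K^2)) \sqrt{d/n}$ and $\gamma^{-1} \lesssim d/n$.

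Substituting these together with the bound \eqref{eq:supV''} on $V_\gamma''$ from Lemma \ref{lemma:V-gamma-general} into the sufficient condition \eqref{eq:convex-condition}, the verification reduces to
\[
\frac{c_2 \sqrt{d/n}}{3 C K^2} \, > \, C_0 \bigl(\gamma^{-1} + \gamma^{-2}\bigr) \bigl(1 + \log(\gamma + 1)\bigr)^{(2k-1)/k},
\]
which in turn, using $\gamma^{-1} \lesssim d/n$ and $\gamma \lesssim n/d$, is implied by the second hypothesis of the theorem once the numerical constants $C, C_0, c_2$ are absorbed into $C_1$. The first hypothesis $n/d \geq 4 C_1 K^4$ is used both to invoke the concentration inequality at the clean $\sqrt{d/n}$ rate and to ensure $C K^2 \sqrt{d/n} < 1/2$, so that the eigenvalue bounds above are non-vacuous.

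The main difficulty is not analytic but rather a matter of bookkeeping of constants: $C_1$ must absorb the universal random-matrix constant $C$, the prior-dependent constant $C_0$ from Lemma \ref{lemma:V-gamma-general}, and the SNR constants $c_1, c_2$. Conceptually, the argument works because when $n/d \gg 1$ the random matrix $\sigma_d^{-2}\bX^\top\bX$ is nearly a scalar multiple of identity, so $\gamma$ can be placed just above $\lambda_{\max}$ while $\gamma - \lambda_{\min}$ remains of order $K^2 \sqrt{n/d}$; the resulting inverse gap of order $\sqrt{d/n}/K^2$ then comfortably dominates the $(d/n)\,\mathrm{polylog}(n/d)$ lower bound on $-V_\gamma''$ delivered by Lemma \ref{lemma:V-gamma-general}.
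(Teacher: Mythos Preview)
Your proposal is correct and follows essentially the same approach as the paper: both arguments invoke a sub-Gaussian concentration bound for the sample covariance (the paper uses Vershynin's matrix deviation inequality on the unit sphere, you use the closely related Theorem~4.6.1), extract eigenvalue bounds $\lambda_{\max},\lambda_{\min}\in\sigma_d^{-2}n\cdot[1-O(K^2\sqrt{d/n}),\,1+O(K^2\sqrt{d/n})]$, pick $\gamma$ just above $\lambda_{\max}$ so that $\gamma-\lambda_{\min}\asymp K^2\sqrt{nd}/\sigma_d^2$, and then compare the resulting inverse gap against the bound from Lemma~\ref{lemma:V-gamma-general}. The only cosmetic difference is that the paper chooses the data-dependent $\gamma=\lambda_{\max}(\sigma_d^{-2}\bX^\top\bX)+K^2\sigma_d^{-2}$ whereas you take a deterministic $\gamma$; both lead to the same arithmetic.
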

\begin{proof}[Proof of \cref{thm:feasible}]
	We prove \cref{thm:feasible} in Appendix \ref{sec:proof-thm:feasible}. 
\end{proof}

As claimed above, \cref{thm:feasible} implies that the sampling problem is with high probability
 feasible given that $n / d$ is larger than a suitable constant.

\paragraph{Independent and identically distributed design.}
When $\bX$ contains i.i.d. entries, 
we can utilize tools from random matrix theory to precisely delineate the 
asymptotic feasible region.  
Throughout this example, we assume $X_{ij} \iidsim \mu_X$, where $\mu_X$ has mean zero, unit variance and finite fourth moment.
To ensure a constant SNR, we further set $\sigma_d = d^{1/2}\sigma_0$ for some $\sigma_0 > 0$ that is independent of $(n, d)$. 
We also assume $n, d \to \infty$ with $n / d \to \delta \in (0, \infty)$.

With these assumptions, the asymptotic spectral distribution of $\bX^{\top} \bX / n$ is characterized by the Marchenko–Pastur law \cite{marchenko1967distribution}, and the extreme eigenvalues are given by the Bai-Yin's law \cite{bai1993limit}. 
We state these two laws below for readers' convenience.
 
%
	The Marchenko–Pastur law $F_{\delta}$ has a density function
	\begin{align*}
		p_{\delta} (y) = \left\{ \begin{array}{ll}
			\frac{\delta}{2\pi x} \sqrt{(b - x)(x - a)}, & \mbox{if  } a \leq x \leq b, \\
			0, & \mbox{otherwise, }
		\end{array} \right.
	\end{align*}
	and has a point mass $1 - \delta$ at the origin if $\delta \in (0, 1)$. In the above display, $a = (1 - 1 / \sqrt{\delta})^2$ and $b = (1 + 1 / \sqrt{\delta})^2$. 
	Under the assumptions of this part, the empirical spectral distribution\footnote{The empirical spectral distribution of an $n \times n$ symmetric matrix $\bM$ refers to uniform distribution over all eigenvalues of $\bM$. } 
	of $\bX^{\top} \bX / n$ converges to $F_{\delta}$. 
	In addition, by the Bai-Yin's law, it holds that
	\begin{align}
	\label{eq:BY-law}
		\lambda_{\max}(\bX^{\top} \bX / n) \overset{a.s.}{\to} (1 + 1 / \sqrt{\delta})^2, \qquad \lambda_{\min}(\bX^{\top} \bX / n) \overset{a.s.}{\to} (1 - 1 / \sqrt{\delta})^2\mathbbm{1}\{\delta \geq 1\}. 
	\end{align}
If $\delta$ is large and $\gamma$ is only slightly larger than $\lambda_{\max} (\sigma_d^{-2} \bX^{\top} \bX)$, 
then by \cref{eq:BY-law}, the denominator on the left-hand-side of \cref{eq:convex-condition} is approximately $4 \sigma_0^{-2}\sqrt{\delta}$, while the right-hand-side of \cref{eq:convex-condition} is $O(\mathrm{polylog}(\delta) / \delta)$. 
This suggests that condition \eqref{eq:convex-condition} is satisfied with high probability for a  sufficiently large $\delta$, 
implying that the problem is feasible. 

We next characterize the asymptotic feasible region. 
Specifically, we say a parameter choice $(\delta, q, \mu, \sigma_0)$ is \emph{asymptotically feasible} 
if there exists $\gamma > 0$, such that 
\begin{align}
\label{eq:asymp-feasible}
	 \frac{1}{\gamma - \delta \sigma_0^{-2}(1 - 1 / \sqrt{\delta})^2 \mathbbm{1}\{\delta \geq 1\}} >- \inf_{x \in \RR} V_{\gamma}''(x), \qquad \gamma > \frac{\delta(1 + 1 / \sqrt{\delta})^2}{ \sigma_0^2}.   
\end{align} 
In the first equation above, the left-hand side represents the limit of $\lambda_{\min}(\bA^{-1})$ as $n, d \to \infty$, while the right-hand side indicates the maximum of $-V_{\gamma}''$. 
In the second equation, the lower bound is the limiting value of $\lambda_{\max}(\sigma_d^{-2} \bX^{\top} \bX)$.

As an illustration, we plot the asymptotic feasible regions for two diffuse densities: Gaussian and Laplace.  
Specifically, we consider $\mu = \normal(0, 1)$ and $\mu = \Laplace (\sqrt{2})$, both having unit second moments. 
We display the asymptotic feasible regions for these two diffuse densities according to \cref{eq:asymp-feasible}. 
The asymptotic feasible region for $\mu = \normal(0, 1)$ is given in Figure \ref{fig:Gaussian_regime}, and that for $\mu = \Laplace (\sqrt{2})$ is given in Figure \ref{fig:Laplace_regime}. 

In the next theorem, we show that \cref{eq:asymp-feasible} provides an almost necessary and sufficient condition for our proposed algorithm to be asymptotically feasible.

\begin{theorem}
	\label{thm:asymp-feasible}
	We assume the conditions listed in this example.
	If \cref{eq:asymp-feasible} holds for some $\gamma$, then with probability $1 - o_n(1)$, there exists $\gamma > \sigma_d^{-2} \|\bX\|_{\op}^2$, such that $\bvarphi$ given in \cref{eq:pi} has a strongly log-concave marginal distribution. 
	Namely, the problem is feasible. 
	On the other hand, if for all $\gamma \geq {\delta(1 + 1 / \sqrt{\delta})^2}{ \sigma_0^{-2}}$, it holds that 
	\begin{align*}
		\frac{1}{\gamma - \delta \sigma_0^{-2}(1 - 1 / \sqrt{\delta})^2 \mathbbm{1}\{\delta \geq 1\}} <- \inf_{x \in \RR} V_{\gamma}''(x), 
	\end{align*}
	then with probability $1 - o_n(1)$, there does not exist $\gamma > \sigma_d^{-2} \|\bX\|_{\op}^2$, such that $\bvarphi$ has a log-concave marginal distribution. 
\end{theorem}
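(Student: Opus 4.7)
The plan has two directions, sufficiency and necessity, and both rest on the Hessian formula $\nabla^2 H(\bvarphi) = \bA^{-1} + \diag\big(V_\gamma''(h_i+\varphi_i)\big)$ from \cref{sec:measure-decomposition}, together with the almost-sure convergences $\lambda_{\min}(\sigma_d^{-2}\bX^{\top}\bX) \to \lambda_{\min}^{*} := \delta\sigma_0^{-2}(1-1/\sqrt{\delta})^2 \mathbbm{1}\{\delta\geq 1\}$ and $\lambda_{\max}(\sigma_d^{-2}\bX^{\top}\bX) \to \lambda_{\max}^{*} := \delta\sigma_0^{-2}(1+1/\sqrt{\delta})^2$ from the Bai--Yin law \eqref{eq:BY-law}.

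For sufficiency, I would fix a witness $\gamma^{*}$ for \eqref{eq:asymp-feasible} and exploit that both inequalities there are strict. Bai--Yin then implies that with probability $1-o_n(1)$, both $\gamma^{*} > \sigma_d^{-2}\|\bX\|_{\op}^2$ and
\[
\frac{1}{\gamma^{*} - \lambda_{\min}(\sigma_d^{-2}\bX^{\top}\bX)} + \inf_{x\in\RR}V_{\gamma^{*}}''(x) > 0
\]
hold. Combined with the Hessian formula (and with $V_{\gamma^*}''(x) \leq 0$ from \cref{lemma:V-gamma-general}), this forces $\nabla^2 H \succeq c\,\id$ uniformly in $\bvarphi$ for some $c > 0$, so the marginal density of $\bvarphi$ is strongly log-concave.

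For necessity, I would fix any $\gamma > \sigma_d^{-2}\|\bX\|_{\op}^2$ and construct a witness $\bvarphi$ with $\nabla^2 H(\bvarphi)$ indefinite. Let $\bu$ be the unit eigenvector of $\sigma_d^{-2}\bX^{\top}\bX$ associated with its smallest eigenvalue, so that $\bu$ is also the bottom eigenvector of $\bA^{-1}$, with eigenvalue $1/(\gamma - \lambda_{\min}(\sigma_d^{-2}\bX^{\top}\bX))$. For arbitrary $\eta > 0$, choose $x^{*} = x^{*}(\gamma)$ with $V_{\gamma}''(x^{*}) < \inf_x V_{\gamma}''(x) + \eta$ and set $\varphi_i = x^{*} - h_i$ for all $i$. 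Then
\[
\bu^{\top}\nabla^2 H(\bvarphi)\,\bu = \frac{1}{\gamma - \lambda_{\min}(\sigma_d^{-2}\bX^{\top}\bX)} + V_{\gamma}''(x^{*}) < \frac{1}{\gamma - \lambda_{\min}(\sigma_d^{-2}\bX^{\top}\bX)} + \inf_x V_{\gamma}''(x) + \eta,
\]
which by Bai--Yin converges to $g_{\infty}(\gamma) + \eta$, where $g_{\infty}(\gamma) := 1/(\gamma - \lambda_{\min}^{*}) + \inf_x V_{\gamma}''(x) < 0$ by hypothesis; sending $\eta \downarrow 0$ gives a strictly negative eigenvalue pointwise in $\gamma$.

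The main obstacle will be upgrading this pointwise construction to a statement that holds simultaneously for all $\gamma > \sigma_d^{-2}\|\bX\|_{\op}^2$. I would treat bounded and large $\gamma$ separately. For bounded $\gamma$: continuity of $g_{\infty}$ extends strict negativity from $[\lambda_{\max}^{*}, \infty)$ to $[\lambda_{\max}^{*}-\eta_0, M]$ for some $\eta_0 > 0$ and any fixed $M$; Bai--Yin yields $\sigma_d^{-2}\|\bX\|_{\op}^2 > \lambda_{\max}^{*}-\eta_0$ w.h.p., and uniform convergence of the finite-$n$ analog of $g_{\infty}$ on the compact $[\lambda_{\max}^{*}-\eta_0, M]$ transfers the strict negativity. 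For $\gamma > M$: the correction
\[
g_n(\gamma)-g_{\infty}(\gamma) = \frac{\lambda_{\min}^{*}-\lambda_{\min}(\sigma_d^{-2}\bX^{\top}\bX)}{(\gamma-\lambda_{\min}(\sigma_d^{-2}\bX^{\top}\bX))(\gamma-\lambda_{\min}^{*})}
\]
decays like $1/\gamma^{2}$, which must be compared with the decay of $|g_{\infty}(\gamma)|$. This tail comparison is the delicate step: both $1/(\gamma-\lambda_{\min}^{*})$ and $-\inf_x V_{\gamma}''(x)$ are of order $1/\gamma$ by \cref{lemma:V-gamma-general}, so a matching quantitative lower bound on $|g_{\infty}(\gamma)|$ (sharpening \cref{lemma:V-gamma-general} under the necessity hypothesis) is required to close the uniform statement.
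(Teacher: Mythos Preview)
Your sufficiency argument is exactly the paper's: fix a $\gamma^*$ witnessing \eqref{eq:asymp-feasible}, invoke Bai--Yin to transfer both strict inequalities to the random eigenvalues, and read off strong convexity of $H$ from the Hessian formula.

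For necessity, you are considerably more careful than the paper. The paper's proof is three lines: it writes
\[
\frac{1}{\gamma-\lambda_{\min}(\sigma_d^{-2}\bX^{\top}\bX)}=\frac{1}{\gamma-\lambda_{\min}^*}+o_P(1)
\]
for all admissible $\gamma$, and then says ``by continuity'' this is strictly below $-\inf_x V_\gamma''(x)$. Your explicit witness $\varphi_i=x^*-h_i$ is correct and simply makes precise the equivalence between failure of \eqref{eq:def-feasible} and non-convexity of $H$, which the paper asserts earlier without proof; this is not where any difficulty lies. The substantive point is the uniformity-in-$\gamma$ you isolate. The $o_P(1)$ correction \emph{is} uniform in $\gamma$ (the denominator $(\gamma-\lambda_{\min}^*)(\gamma-\lambda_{\min}(\sigma_d^{-2}\bX^{\top}\bX))$ stays bounded away from zero since $\gamma$ exceeds roughly $\lambda_{\max}^*$), but uniform $o_P(1)$ plus $g_\infty(\gamma)<0$ pointwise does not yield $g_n(\gamma)<0$ for all $\gamma$ unless one also controls $\sup_{\gamma\ge\lambda_{\max}^*}g_\infty(\gamma)$, and the domain is non-compact. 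The paper's ``by continuity'' does not address this; it has the same gap you flag. Your split into bounded and large $\gamma$, together with the $1/\gamma^2$ decay of $g_n-g_\infty$, is the right way to attack the tail, and you are correct that closing it would need a matching quantitative lower bound on $|g_\infty(\gamma)|$ that \cref{lemma:V-gamma-general} alone does not provide. In short: your proposal follows the paper's route but is more honest about what remains to be checked.
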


\begin{proof}[Proof of \cref{thm:asymp-feasible}]
	We prove \cref{thm:asymp-feasible} in Appendix \ref{sec:proof-thm:asymp-feasible}. 
\end{proof}




\begin{figure}
	\centering
	\includegraphics[width=\linewidth]{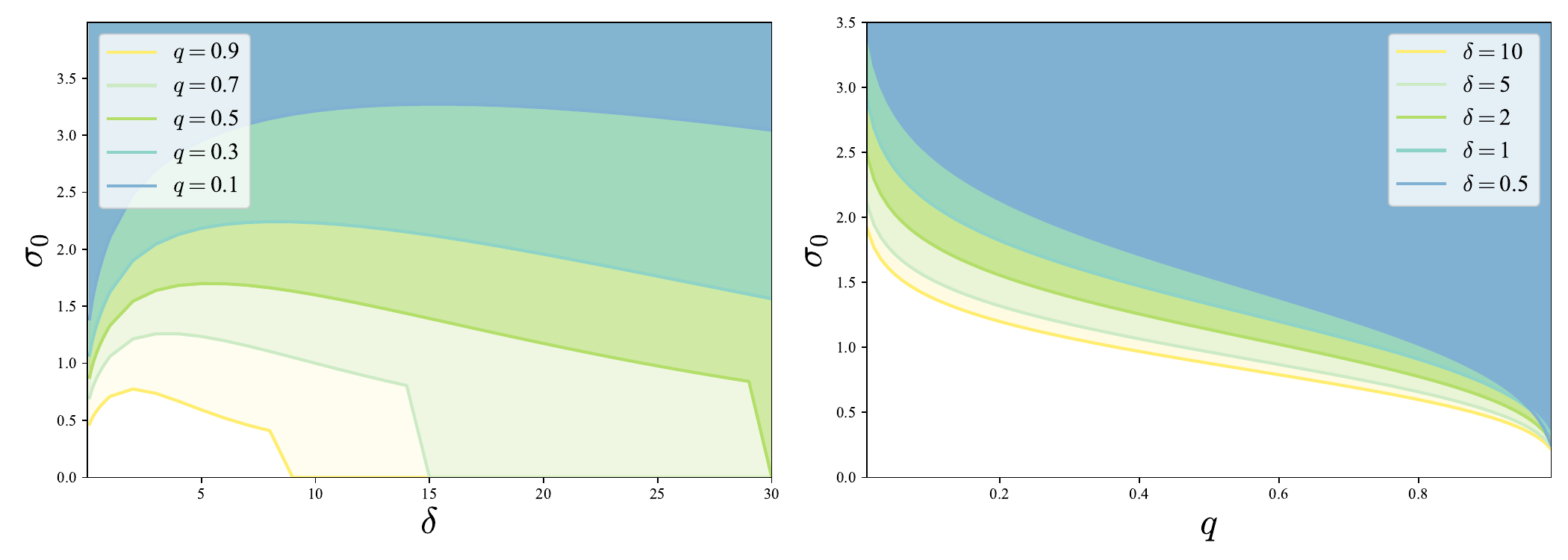}
	\caption{Illustration of the asymptotic feasible regions when $\mu = \normal(0,1)$.  
	In the above two panels, each line separates the entire panel into two regions: the colored
	 regions are asymptotically feasible, while the blank regions  are asymptotically infeasible.
	In the left panel, we fix $q \in \{0.1, 0.3, 0.5, 0.7, 0.9\}$, and use different colors to 
	indicate different values of $q$. We plot the asymptotic feasible regions in the $\delta$ -- $\sigma_0$ plane. 
	In the right panel, we fix $\delta \in \{10, 5, 2, 1, 0.5\}$, with different colors indicating
	 different choices of $\delta$, and
	we present the asymptotic feasible regions in the $q$ -- $\sigma_0$ plane. 
	 }
	\label{fig:Gaussian_regime}
\end{figure}



\begin{figure}
	\centering
	\includegraphics[width=\linewidth]{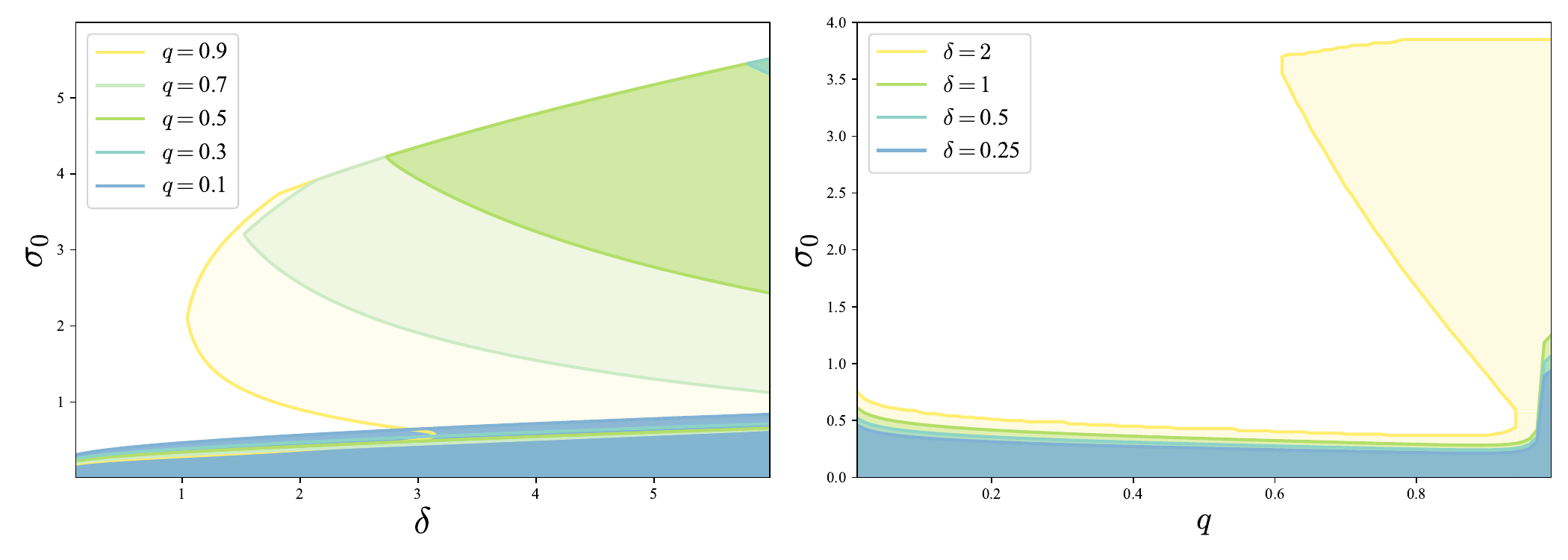}
	\caption{Illustration of the asymptotic feasible regions when $\mu = \Laplace(\sqrt{2})$.  
	In the above two panels, each line separates the entire panel into two regions: the colored regions are asymptotically feasible, while the blank regions  are asymptotically infeasible.
	In the left panel, we fix $q \in \{0.1, 0.3, 0.5, 0.7, 0.9\}$, and use different colors to indicate different $q$ values. We plot the asymptotic feasible regions in the $\delta$ -- $\sigma_0$ plane. 
	In the right panel, we fix $\delta \in \{2, 1, 0.5, 0.25\}$, and different colors stand for different choices of $\delta$. 
	We present the asymptotic feasible regions in the $q$ -- $\delta_0$ plane instead. }
	\label{fig:Laplace_regime}
\end{figure}

\section{Numerical experiments}
\label{sec:experiments}

In this section, we demonstrate the effectiveness of the proposed two-stage sampling algorithm.
We emphasize that the objective of this simulation study is not to illustrate our algorithm's ability for conducting variable selection or estimation, as these are primarily influenced by the quality of the prior distribution. 
Alternatively, we aim to validate that, given a prior distribution of the form \eqref{eq:mixture-prior} and a pair of observations $(\by, \bX)$, our algorithm is able to sample from the corresponding posterior distribution with high accuracy.
To this end, we conduct simulations on synthetic datasets.

This section is organized as follows. 
In \cref{sec:prior-arts}, we discuss several earlier algorithms for Bayesian regression
that we adopt as baselines. 
In \cref{sec:log-concave}, we review several algorithms that have been proved successful for log-concave sampling. 
We state our simulation settings in \cref{sec:simulation-settings}. 
We present implementation details of our algorithm in \cref{sec:implementation-details}. 
Finally, we present our numerical results in Section \ref{sec:simulation-outcomes}.

\subsection{Baseline algorithms}
\label{sec:prior-arts}

We summarize several sampling algorithms that tackle the spike-and-slab prior,
since we will compare them empirically with our approach.


\paragraph{Stochastic search variable selection (SSVS).}
This algorithm was first proposed in \cite{george1993variable} to handle priors that have a Gaussian mixture form.  
SSVS is a Gibbs sampler that alternates between conditional sampling for different parameters. 
Following the discussions in Section 5.2 of \cite{bai2021spike}, SSVS can also be applied to handle the spike-and-slab Lasso prior if we treat the Laplace distribution as a scale mixture of Gaussians with an exponential mixing distribution. 
The original SSVS algorithm involves computing matrix inversions and hence is computationally expensive in high-dimensional settings. 
Algebraic tricks have been proposed to reduce the computational burden, see \cite{bhattacharya2016fast,narisetty2018skinny} and the discussions in \cite{nie2023bayesian}.

\paragraph{Weighted Bayesian Bootstrap (WBB).}
Following the idea of the weighted likelihood bootstrap \cite{newton1994approximate}, the WBB 
algorithm introduced in \cite{newton2018weighted} constructs a randomly weighted posterior 
distribution by randomly assigning the observations with independently distributed weights. 
In their approach, both the likelihood and the prior values are reweighted.
They then propose to employ off-the-shelf optimization techniques to compute the mode of this reweighted posterior distribution.
This entire procedure is then independently repeated with different weight values, and the solutions to the optimization problems form a collection of samples that approximate the target posterior distribution.
Note that an optimization procedure is required to generate every single output sample.

\paragraph{Bayesian Bootstrap spike-and-slab LASSO (BB-SSL). } 
The BB-SSL approach was first proposed in \cite{nie2022bayesian}, and follows a similar idea as WBB. 
The major distinction is that instead of reweighting the priors, BB-SSL  applies random perturbations
 to the prior means.   
This method leverages the mode detection ability and efficiency of the spike-and-slab Lasso procedure
  \cite{rovckova2018spike}. 
As for the sampling algorithm, they propose to create multiple independently perturbed datasets and
 approximate the target posteriors by performing MAP optimization separately on each of the perturbed dataset.

\subsection{Log-concave sampling}
\label{sec:log-concave}

We summarize prominent examples of  sampling algorithms that come with provable guarantees when the target distribution is log-concave. 

\paragraph{Unadjusted Langevine algorithm (ULA) and underdamped
Langevin MCMC. } 
The unadjusted Langevine algorithm (ULA) is an MCMC method that updates the current state at each step using the gradient of the logarithmic density and additive Gaussian noise. 
This update mechanism aims to mimic the Langevin dynamics. 
A simple variant of ULA is the underdamped Langevin MCMC that incorporates an extra momentum term. 
When the target distribution is log-concave, upper bounds on the mixing times for both algorithms have been established. 
See, for instance, \cite{durmus2019high} for a result on ULA and \cite{cheng2018underdamped} for a result on underdamped Langevin MCMC.

\paragraph{Metropolis-adjusted Langevin algorithm (MALA). } 
The Metropolis-adjusted Langevin algorithm (MALA) differs from ULA by incorporating an accept-reject correction step.  
We refer to \cref{alg:MALA} for more details on MALA. 
The correction step ensures that the output distribution of MALA converges to the target distribution as the number of steps tends to infinity. 
For an upper bound on MALA's mixing time, see \cite{dwivedi2019log}.

\paragraph{Hamiltonian Monte Carlo (HMC). }
Hamiltonian Monte Carlo (HMC) is a powerful MCMC algorithm that leverages concepts from physics. 
At each iteration, HMC updates both the state location and a velocity term.
HMC demonstrates outstanding performance and faster convergence in many settings. 
Theoretical guarantees for HMC can be found in \cite{chen2019optimal}. 
We present implementation details of HMC in \cref{alg:HMC}.

\subsection{Simulation settings}
\label{sec:simulation-settings}

We adopt two settings, one with a Gaussian diffuse density and the other with a Laplace diffuse density. 
For this experiment, we use design matrices $\bX$ generated from a random ensemble. 
We will consider  choices of the parameters that fall both inside and outside the feasible regions 
indicated in Figures \ref{fig:Gaussian_regime} and \ref{fig:Laplace_regime}.

\subsubsection*{Setting I: Gaussian diffuse density}

In our first experiment, we let $\mu = \normal(0,1)$. 
As for the other parameters, we set $n = 100$, $d = 50$, $q = 0.2$, and $\sigma_d = 3\sqrt{d}$. 
We generate the linear coefficients $\btheta$ following the Gaussian spike-and-slab prior: 
$\theta_i \iidsim (1 - q) \delta_0 + q \, \normal(0, 1)$ for $i \in [d]$. We assume that the 
rows of $\bX$  are independent Gaussian  $\bx_i\sim\normal(\mathbf{0}_d, \bSigma)$, with 
$\Sigma_{ij} = \Sigma_{ji} = \rho^{-|i - j|}$ and $\rho \in [0, 1]$. 
We consider here $\rho \in \{0, 0.3, 0.6, 0.9\}$, and make the convention that $0^0 = 1$.  
Finally, we generate the response vector by taking $\by = \bX \btheta + \bepsilon$ for 
$\bepsilon \sim \normal(\mathbf{0}_n, \sigma_d^2)$. 

\subsubsection*{Setting II: Laplace diffuse density}

In our second experiment, we choose $\mu = \Laplace(\sqrt{2})$,
so that $\mu$ has unit second moment. 
For this experiment, we let $n = 100$, $d = 30$, $q = 0.7$, and $\sigma_d = 3\sqrt{d}$.
We assume that $\theta_i \iidsim (1 - q) \delta_0 + q \Laplace(\sqrt{2})$ for $i \in [d]$.
Once again, we assume that the features $\bx_i$ are independently generated from $\normal(\mathbf{0}_d, \bSigma)$, with $\Sigma_{ij} = \Sigma_{ji} = \rho^{-|i - j|}$, and we let $\by = \bX \btheta + \bepsilon$ for $\bepsilon \sim \normal(\mathbf{0}_n, \sigma_d^2)$. 
We also consider various correlation levels $\rho \in \{0, 0.3, 0.6, 0.9\}$.

We point out that, for $\rho=0$, the settings considered here are
feasible (both for Gaussian and Laplace densities), as can be checked from
 Figures \ref{fig:Gaussian_regime} and \ref{fig:Laplace_regime}.

\subsection{Implementation details}
\label{sec:implementation-details}

In this section, we provide the implementation details for the proposed two-stage sampling algorithm.
We first discuss sampling of $\bvarphi$, which has a log-concave density function when the parameters are feasible.

We employ two approaches: the Metropolis-adjusted Langevin algorithm (MALA) and the Hamiltonian Monte Carlo
 (HMC) algorithm. 
Throughout the experiment, we take $\gamma = \sigma_d^{-2} \|\bX\|_{\op}^2 + 0.1$ to ensure $\bA$ is positive semi-definite.

\subsubsection{Implementation details for MALA}

We consider MALA equipped with the Euler–Maruyama discretization scheme, as outlined in \cref{alg:MALA}. 
Specifically, MALA takes as inputs a positive step size $\tau > 0$ and a total number of steps $K \in \NN_+$. 
Following the suggestions from \cite{dwivedi2019log}, we initialize the MALA algorithm randomly with
 distribution $\normal(\bvarphi^{\ast}, L^{-1} \id_d)$. 
Here, $\bvarphi^{\ast}$ denotes the unique mode of the density function of $\bvarphi$, which is also the unique maximizer of $-H(\bvarphi)$ (recall that $-H(\bvarphi)$ is strongly concave). 
We also assume that $H(\bvarphi)$ is $L$-smooth, in the sense that
\begin{align*}
	H(\bvarphi_1) - H(\bvarphi_2) - \nabla H(\bvarphi_2)^{\sT} (\bvarphi_1 - \bvarphi_2) \leq \frac{L}{2} \|\bvarphi_1 - \bvarphi_2\|_2^2. 
\end{align*}
Since $\gamma = \sigma_d^{-2} \|\bX\|_{\op}^2 + 0.1$, we conclude that $\|\bA^{-1}\|_{\op} \leq 10$. 
In addition, note that the Hessian of $H(\bvarphi)$ is positive semi-definite, and is the sum of
 $\bA^{-1}$ and a diagonal matrix that has non-positive entries. 
Therefore, we conclude that $\|\nabla^2 H(\bvarphi) \|_{\op} \leq 10$. 
That is to say,  we can always take $L = 10$.

When implementing MALA, we tune the step size $\tau$ to get an acceptance rate between $30\%$ and $50\%$. 
We estimate $\bvarphi^{\ast}$ using gradient ascent\footnote{For this part, we adopt a learning rate $0.01$
and a maximum number of iterations $10^5$.}.
We also discard samples from the burn-in period. 
We determine the lengths of the burn-in period and the MCMC chain via  diagnostic plots. 
More details can be found in Appendix \ref{sec:diagnostics}.

\begin{algorithm}
\caption{Metropolis-adjusted Langevin algorithm (MALA)}\label{alg:MALA}
\begin{algorithmic}[1]
\REQUIRE Step size $\tau$, number of steps $K$;
\STATE Get an estimate of $\bvarphi^{\ast}$ via  gradient ascent, and denote it by $\widehat\bvarphi^{\ast}$; 
\STATE Initialize MALA at $\bvarphi_0 \sim \normal(\hat\bvarphi^{\ast}, L^{-1} \id_d)$, with $L = 10$; 
\FOR{$k = 1, 2, \cdots, K$}
	\STATE $\mathtt{Proceed} \gets \mathtt{False}$;
	\WHILE{$\mathtt{Proceed} = \mathtt{False}$}
	\STATE Generate $\xi_k \sim \normal(\mathbf{0}, \id_d)$ independent of everything else so far; 
	\STATE $\bvarphi_{k}' \gets \bvarphi_{k - 1} - \tau \cdot  \nabla H(\bvarphi_{k - 1}) + \sqrt{2\tau} \xi_k$; 
	\STATE $\alpha \gets \min \left\{ 1,\, e^{-H(\bvarphi_{k}') + H(\bvarphi_{k - 1})}\cdot \frac{q( \bvarphi_{k - 1} \mid \bvarphi_{k}' )}{q( \bvarphi_{k}' \mid \bvarphi_{k - 1})} \right\}$, where
	$$
		q(\bx' \mid \bx) \propto \exp\left( -\frac{1}{4\tau} \|\bx' - \bx + \tau \nabla H(\bx)\|^2 \right); 
$$
	\STATE Sample $u \sim \Unif[0,1]$;
	\IF{$u \leq \alpha$}
		\STATE $\mathtt{Proceed} \gets \mathtt{True}$; 
		\STATE $\bvarphi_k \gets \bvarphi_k'$;
	\ENDIF
	\ENDWHILE

\ENDFOR
\end{algorithmic}
{\bf Return}: $\{\bvarphi_k: k \in [K]\}$.
\end{algorithm}

\subsubsection{Implementation details for HMC}

Alternatively, we can apply HMC to sample $\bvarphi$, which we state as \cref{alg:HMC} in the appendix. 
At each step, HMC proposes to update the current step following the outcome of a leapfrog integrator. 
Specifically, HMC requires inputting the mass matrix $\bOmega$, the leapfrog stepsize $\epsilon$, the number of leapfrog steps $\ell$, and the Monte Carlo steps $K$. 
In this experiment, we set $\bOmega = \id_d$, and adjust the other parameters based on diagnostic plots. More details of the diagnostic step can be found in Appendix \ref{sec:diagnostics}.   
Similar to the MALA setting, we initialize HMC at $\normal(\bvarphi^{\ast}, L^{-1} \id_d)$ with $L = 10$, 
and set $\gamma = \sigma_d^{-2} \|\bX\|_{\op}^2 + 0.1$. 
%

\subsubsection{Algorithm pipeline}

Given $\bvarphi$, we can then sample $\btheta$ from the conditional distribution $\pi (\dd \btheta \mid \bvarphi, \by, \bX)$, 
which per \cref{eq:conditional-product} has a product form and is easy to sample.

For the sampling of $\bvarphi$, in this experiment we utilize one of MALA and HMC. 
After a burn-in period, for each $\bvarphi$ sample in the Markov chain, we will sample $\btheta$ from the product conditional distribution $\pi(\de \btheta \mid \bvarphi, \by, \bX)$. 
This procedure is detailed in \cref{alg:theta}. 
\begin{algorithm}
\caption{Sampling $\btheta$}\label{alg:theta}
\begin{algorithmic}[1]
\REQUIRE Number of burn-in steps $B$, number of desired samples $N$, an MCMC sampler for $\bvarphi$; 
\STATE Implement the MCMC sampler for $\bvarphi$ and discard the first $B$ samples from the burn-in period;
\STATE $\cS \gets \emptyset$;
\FOR{$i = 1, 2, \cdots, N$}
	\STATE Perform one update step using the given MCMC sampler, and get a new sample $\bvarphi$; 
	\STATE Sample $\btheta_i \sim \pi(\btheta \mid \bvarphi, \by, \bX)$; 
	\STATE $\cS \gets \cS \cup \{\btheta_i\}$;
\ENDFOR
\end{algorithmic}
{\bf Return}: $\cS$
\end{algorithm}

\subsection{Simulation outcomes}
\label{sec:simulation-outcomes}

We will evaluate the quality of samples produced by \cref{alg:theta} by assessing their effectiveness in performing uncertainty quantification. 


Specifically, we consider the two settings listed in \cref{sec:simulation-settings}.
For every realization of $(\btheta, \bX, \by)$, we  run our proposed two-stage algorithm based on 
MALA or HMC as well as several other sampling algorithms listed in \cref{sec:prior-arts}. 
To set up comparison, we also draw samples using the Python library pymc3, which builds
 on advanced sampling techniques such as the No-U-Turn Sampler (NUTS). 
For the implementation of SSVS, WBB, and BB-SSL, we use the R package BBSSL developed by
 Nie and Rockova \cite{nie2023bayesian}. 
BBSSL is specifically designed to handle the spike-and-slab LASSO prior. 
To apply BBSSL with a point-mass spike-and-slab prior in our context,
 we specify a sufficiently small variance for the spike. 
Additionally, we adopt a separable penalty and input the true mixture probabilities into
 the function calls. 
 
We consider both correctly specified prior and incorrectly specified prior when implementing 
the proposed two-stage algorithm and pymc3. 
Here, we say the prior is incorrectly specified if we run the algorithm assuming 
$\mu = \normal(0,1)$ when $(\btheta, \bX, \by)$ is generated following the prior distribution of setting II, 
or we run the algorithm assuming $\mu = \Laplace(\sqrt{2})$ when $(\btheta, \bX, \by)$ is generated 
following the prior distribution of setting I. 

For each algorithm, we collect $10^4$ samples after the burn-in period (the length of the 
burn-in period is discussed in Appendix \ref{sec:diagnostics}), 
and construct credible intervals for every coordinate of $\btheta$ based on the collected samples. 
Specifically, we take the 2.5\% and 97.5\% quantiles as the lower and upper ends for the credible interval.
We can then determine whether the constructed intervals contain the true coordinates by 
checking the entries of the true coefficient vector $\btheta$. 

To assess sample quality, we repeat this experiment independently for 1000 times and compute
 the empirical coverage probability.
Namely, we independently generate 1000 data tuples $(\btheta, \bX, \by)$, construct credible
 intervals, get $1000d$ coverage indicators, and take the average of these $1000d$ indicators. 
We expect this average to be around 0.95 if the algorithm adopted are indeed sampling from the target posterior. 
For both settings I and II, we display the empirical coverage rates under different design matrix settings as 
Figure \ref{fig:coverage_bar}.
The figure shows that our algorithm maintains coverage levels close to 0.95 in nearly all settings. 
However, we observe a slight overcoverage for our algorithm  when using HMC for log-concave sampling. Since MALA performs well in the same setting, we suspect the overcoverage is
  caused by HMC rather than the measure decomposition. 

\begin{figure}[ht]
	\centering
	\includegraphics[width=\linewidth]{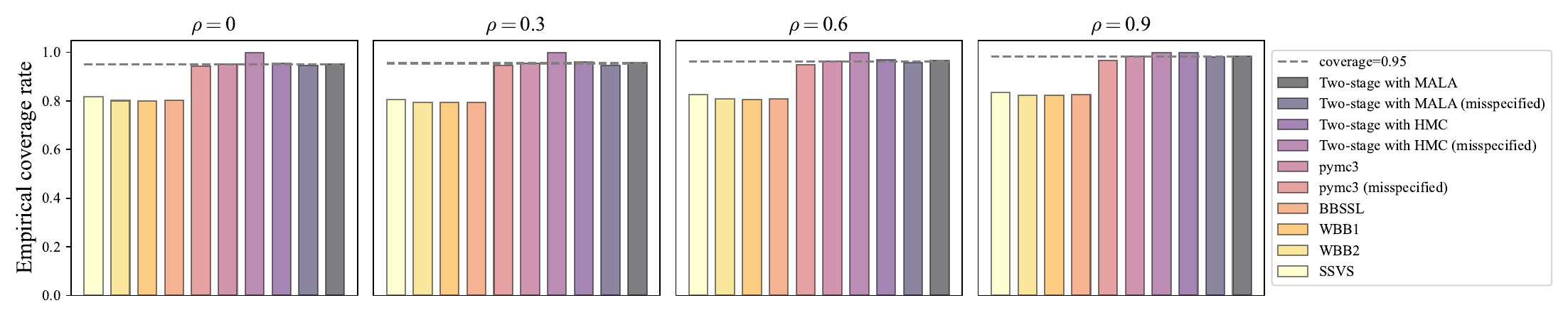}
	\includegraphics[width=\linewidth]{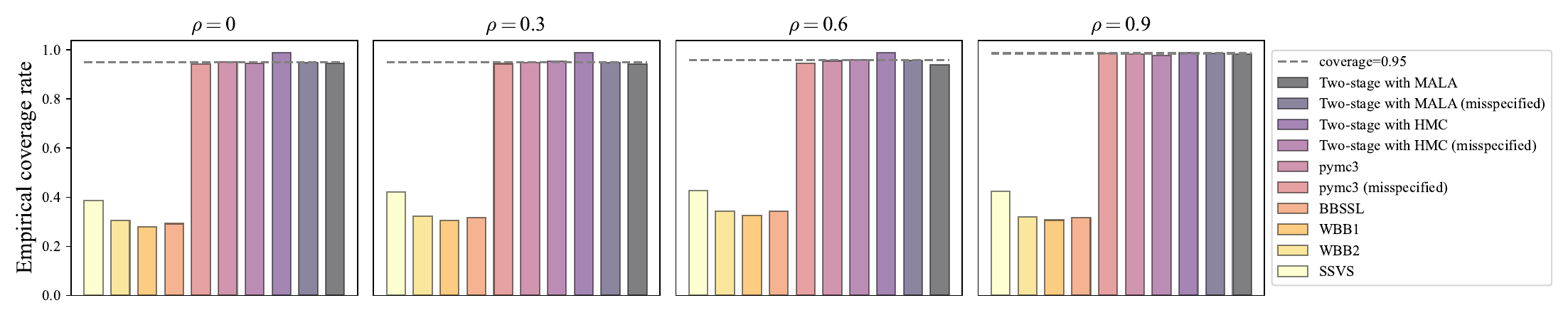}
	\caption{Bar plots that demonstrate empirical coverage rates. The upper panel is for setting I 
	and the bottom panel is for setting II. 
	For this plot, we independently conducted the experiment 1,000 times, and computed the empirical 
	coverage rates by averaging over 1,000 outcomes. Here, different bars represent the empirical coverage rates for 
	different algorithms, and the horizontal dashed line indicates the 0.95 desired coverage level. }
	\label{fig:coverage_bar}
\end{figure}

We emphasize that our algorithm offers no theoretical guarantees outside the feasible region, 
and may perform poorly compared to other algorithms in such cases. 
To illustrate this point, 
we consider a simple example with parameters $n = 5$, $d = 20$, $q = 0.2$, and $\sigma_d = 1$. 
We check two cases $\mu = \normal(0, 1)$ and $\mu = \Laplace(\sqrt{2})$. 
As for the design matrix, once again we assume $\bx_i \sim_{i.i.d.} \normal(\mathbf{0}_d, \bSigma)$ with $\Sigma_{ij} = \Sigma_{ji} = \rho^{|i - j|}$ for $i, j \in [d]$. 
We perform a similar experiment and compute the empirical coverage rates for different algorithms, assuming that our two-stage algorithm always has access to a correctly specified prior. 
We present the simulation outcomes in Figure \ref{fig:coverage_bar_outside}.
From the figure, we see that our algorithm achieves lower coverage rates than other algorithms and 
falls short of the desired 0.95 benchmark.

\begin{figure}[ht]
		\centering
	\includegraphics[width=\linewidth]{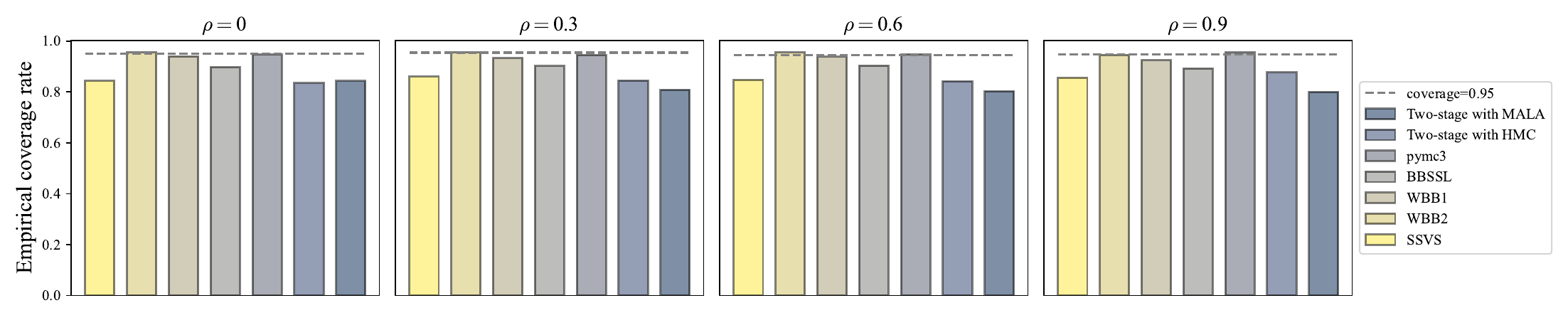}
	\includegraphics[width=\linewidth]{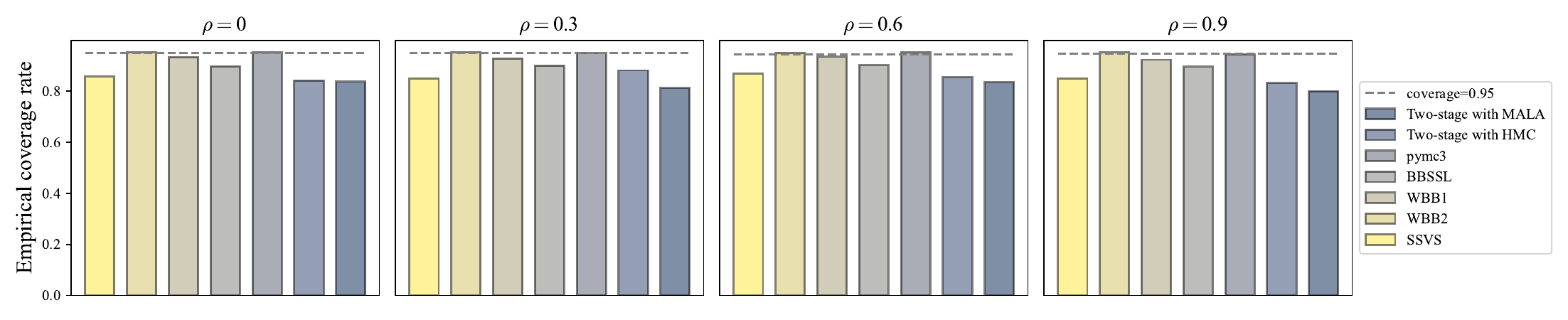}
		\caption{Bar plots that demonstrate empirical coverage rates. The upper panel is for $\mu = \normal(0, 1)$, and the bottom panel is for $\mu = \Laplace(\sqrt{2})$. 
	To make this plot, we independently conducted the experiment 1,000 times, and computed the empirical coverage rates by taking the average. Different bars stand for the empirical coverage rates for different algorithms. The horizontal dashed line is the 0.95 desired coverage level. }
	\label{fig:coverage_bar_outside}
\end{figure}

Note that the feasible region depends on $(\bX, q, \sigma_d, \mu)$. 
We observe $\bX$ as a part of the data. 
The other parameters can be estimated using methods such as empirical Bayes.
Hence, it is possible to determine from the data whether we are operating
within the feasible region or not.
In the former case, the algorithm is guaranteed to produce samples that approximate the 
correct posterior.

We emphasize that, in principle, it is possible for the algorithm to succeed also outside 
the feasible region. 
Indeed, even if the density of $\bvarphi$ is not log-concave, it might be mildly so, 
and MALA or HMC might still be able to mix rapidly.

Finally, we comment that the unsatisfactory performance of our algorithm in the context of Figure \ref{fig:coverage_bar_outside} is likely due to the mixing failure of Markov chains when the target distribution (i.e., the marginal distribution of $\bvarphi$) is not log-concave. 
Such inferior behavior remains as we increase the burn-in period length and the thinning interval length. 
Specifically, suppose we  take one sample in every $g$ samples along the $\bvarphi$ Markov chain, and set the length of the burn-in period to be $g \times 10^4$ steps. 
For $g \in \{1, 2, \cdots, 10\}$, we use the two-stage algorithm to collect $10^4$ samples, construct the credible intervals, repeat the procedure 1000 times, and compute the empirical coverage rates. 
We present the outcomes of this experiment as Figure \ref{fig:increase_length_and_gap}. 
From this figure, we see that augmenting the burn-in period and the thinning interval offers little improvement in terms of empirical coverage rates, suggesting the difficulty of sampling from non-log-concave distributions.  

\begin{figure}[ht]
\centering
	\includegraphics[width=\textwidth]{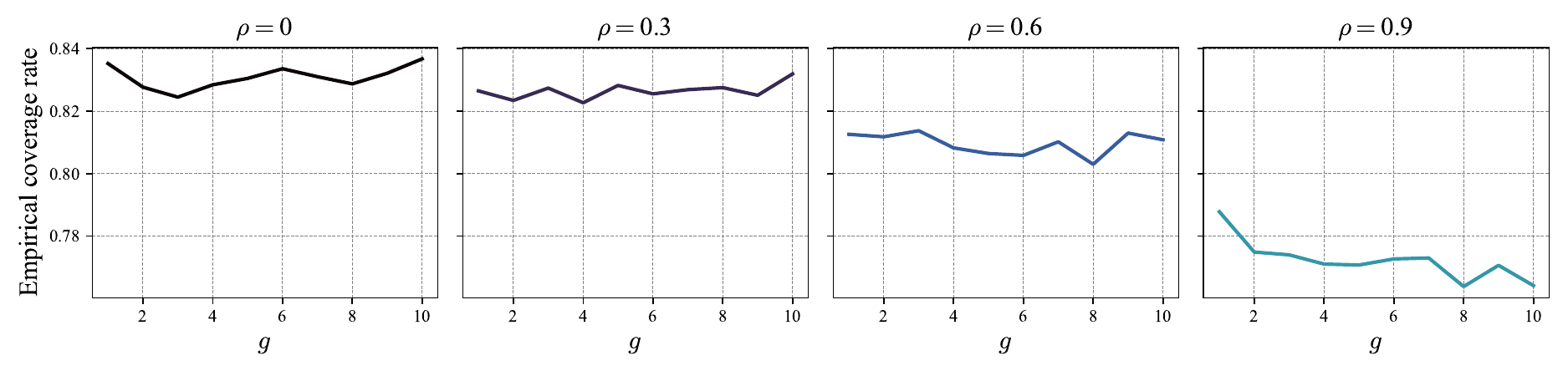}
\caption{Line charts that display the empirical coverage rates. 
Here, we adopt the same parameter setting as that of Figure \ref{fig:coverage_bar_outside}, and focus on the Gaussian mixture prior with $\mu = \normal(0,1)$. For this experiment, we implement MALA for the sampling of $\bvarphi$, and consider $\rho \in \{0, 0.3, 0.6, 0.9\}$ and $g \in [10]$. We plot the empirical coverage rates obtained from 1,000 independent experiments for different combinations of $(\rho, g)$. From the figure, we see that increasing the burn-in period and the thinning interval length does not improve the realized coverage rates produced by our algorithm in a setting that is outside the feasible region.  }
\label{fig:increase_length_and_gap}
\end{figure}

\subsection*{Acknowledgment}

This work was supported by the NSF through award DMS-2031883, the Simons Foundation through
Award 814639 for the Collaboration on the Theoretical Foundations of Deep Learning, the NSF
grant CCF2006489 and the ONR grant N00014-18-1-2729.

\bibliographystyle{alpha}
\bibliography{bib.bib}

\newpage

\begin{appendices}

\section{List of figures}

We present in this section a list of all figures that appear in our paper. 
The summary is presented as Table \ref{table:figure_list}. 

\begin{table}[!ht]
\centering
\begin{tabular}{l  l  l}
    \toprule
    \textbf{Section} & \textbf{Figure} & \textbf{Description} \\
    \midrule
    \cref{sec:introduction} & Figure \ref{fig:shape} & 
    Comparison of the output distribution and the target posterior \\
    \cref{sec:asymptotic-feasibility} & Figure \ref{fig:Gaussian_regime} & Asymptotic feasible region when $\mu = \normal(0, 1)$ \\
    \cref{sec:asymptotic-feasibility} & Figure \ref{fig:Laplace_regime} & Asymptotic feasible region when $\mu = \Laplace(\sqrt{2})$ \\
    \cref{sec:simulation-outcomes} & Figure \ref{fig:coverage_bar} & Empirical coverage rates attained by different algorithms \\
    \cref{sec:simulation-outcomes} & Figure \ref{fig:coverage_bar_outside} & Performance of two-stage algorithm outside the feasible region \\
    \cref{sec:simulation-outcomes} & Figure \ref{fig:increase_length_and_gap} & Empirical coverage rates with different chain parameters  \\
    \cref{sec:diagnostics1} & Figure \ref{fig:setting1-MALA-diagnostics} & Diagnostic plots for MALA under Setting I \\
    \cref{sec:diagnostics2} & Figure \ref{fig:setting1-HMC-diagnostics} & Diagnostic plots for HMC under Setting I \\
    \cref{sec:diagnostics3} & Figure \ref{fig:setting2-MALA-diagnostics} & Diagnostic plots for MALA under Setting II \\
    \cref{sec:diagnostics4} & Figure \ref{fig:setting2-HMC-diagnostics} & Diagnostic plots for HMC under Setting II \\
    \bottomrule
\end{tabular}
\caption{List of figures that appear in our paper. }
\label{table:figure_list}
\end{table}

\section{Proof of Lemma \ref{lemma:V-gamma-general}}
\label{sec:proof-lemma:V-gamma-general}	

\begin{proof}[Proof of Lemma \ref{lemma:V-gamma-general}]
	For simplicity, we define 
	\begin{align*}
		g(x) := \int_{\R} e^{x \theta - \frac{\gamma}{2} \theta^2} \mu(\de \theta), \qquad p(x)  := \frac{q g(x)}{1 - q + q g(x)}. 
	\end{align*}
	Note that $p(x) \in [0, 1]$ for all $x \in \RR$. 
	Taking the first and second derivatives of $V_{\gamma}$, we obtain 
	\begin{align}\label{eq:V-gamma-pp}
	\begin{split}
		 V_{\gamma}'(x) = &- \frac{q g'(x)}{1 - q + q g(x)}, \\
		 V_{\gamma}''(x) 
		 = & - p(x) \cdot \left( \frac{g''(x)}{g(x)} - \Big( \frac{g'(x)}{g(x)} \Big)^2 \right) - p(x)(1 - p(x)) \cdot \left( \frac{g'(x)}{g(x)} \right)^2. 
	\end{split}
	\end{align}
	In addition, standard calculations reveal that the first and second derivatives of $g$ are related to the conditional expectation and variance, as shown in the following equalities  
	\begin{align}\label{eq:conditional-e-and-v}
	\begin{split}
		& \frac{g'(x)}{g(x)} = \E\left[ U \mid {\gamma} U + \sqrt{\gamma} Z =  x \right],  \\
		& \frac{g''(x)}{g'(x)} - \left( \frac{g'(x)}{g(x)} \right)^2 = \Var\left[ U \mid {\gamma} U + \sqrt{\gamma} Z =  x \right].
	\end{split} 
	\end{align}
	In the above display, $U$ and $Z$ are independent random variables with marginal distributions $\mu$ and $\normal(0, 1)$, respectively. 
	Putting together \cref{eq:V-gamma-pp,eq:conditional-e-and-v}, we immediately see that $V_{\gamma}''(x) \leq 0$ for all $x \in \RR$. 
	
Next, we apply the Brascamp–Lieb inequality \cite{brascamp1976extensions} to bound $V_{\gamma}''$. 
We copy this inequality below for readers' convenience. 
\begin{lemma}[Brascamp–Lieb inequality]
\label{lemma:brascamp-lieb}
	For any distribution with density $\pi \propto \exp(-H)$ for some $H: \RR^n \to \RR$, if there is a positive semidefinite matrix $\Gamma$ such that $\nabla^2 H  \succeq \Gamma$, then $\Cov[\pi] \preceq \Gamma^{-1}$.
\end{lemma}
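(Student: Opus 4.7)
The plan is to reduce the statement to the functional version of the Brascamp--Lieb inequality, which asserts that for every smooth $f \in L^2(\pi)$,
\begin{align*}
\Var_\pi(f) \leq \int \nabla f^\top \Gamma^{-1} \nabla f \, \dd \pi.
\end{align*}
Once this is in hand, specializing to the linear functional $f(x) = \langle v, x \rangle$ for an arbitrary $v \in \RR^n$ yields $v^\top \Cov[\pi] v \leq v^\top \Gamma^{-1} v$, which is equivalent to the operator bound $\Cov[\pi] \preceq \Gamma^{-1}$. In what follows I will assume $\Gamma \succ 0$; the merely positive semidefinite case can be recovered by replacing $\Gamma$ with $\Gamma + \varepsilon \id$, applying the strict version, and sending $\varepsilon \downarrow 0$ (with $\Gamma^{-1}$ understood as the Moore--Penrose pseudoinverse on $\mathrm{range}(\Gamma)$ if $\Gamma$ is singular).

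To prove the functional inequality I would use the Helffer--Sjöstrand / Bochner approach based on the weighted Laplacian $L := \Delta - \langle \nabla H, \nabla \rangle$, which is the $L^2(\pi)$-symmetric generator of the overdamped Langevin semigroup attached to $\pi$ and satisfies the integration-by-parts identity $\int (Lf) g \, \dd \pi = -\int \nabla f \cdot \nabla g \, \dd \pi$. Strong convexity of $H$ inherited from $\nabla^2 H \succeq \Gamma \succ 0$ yields a \Poincare{} inequality, so for any centered $f$ the Poisson equation $-L g = f - \E_\pi f$ has a smooth solution $g$. Then
\begin{align*}
\Var_\pi(f) = \int f \, (-Lg) \, \dd \pi = \int \nabla f \cdot \nabla g \, \dd \pi \leq \Big( \int \nabla f^\top \Gamma^{-1} \nabla f \, \dd \pi \Big)^{1/2} \Big( \int \nabla g^\top \Gamma \nabla g \, \dd \pi \Big)^{1/2}
\end{align*}
by the $\Gamma$-weighted Cauchy--Schwarz inequality, so the entire argument reduces to dominating the second factor by $\Var_\pi(f)$.

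For the latter I would invoke the Bochner / H\"ormander identity
\begin{align*}
\int (Lg)^2 \, \dd \pi = \int \|\nabla^2 g\|_{\mathrm{HS}}^2 \, \dd \pi + \int \nabla g^\top \nabla^2 H \, \nabla g \, \dd \pi,
\end{align*}
obtained by writing $\int (Lg)^2 \dd \pi = -\int \nabla(Lg) \cdot \nabla g \, \dd \pi$ and then commuting $\nabla$ with $L$ and integrating by parts once more. Since the Hessian-of-$g$ term is non-negative and $\nabla^2 H \succeq \Gamma$,
\begin{align*}
\int \nabla g^\top \Gamma \nabla g \, \dd \pi \leq \int (Lg)^2 \, \dd \pi = \Var_\pi(f),
\end{align*}
and substituting into the previous display and dividing by $\Var_\pi(f)^{1/2}$ gives the functional inequality. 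The main technical obstacle will be justifying the integration-by-parts manipulations and the existence and $H^2$-regularity of the solution to the Poisson equation under only $C^2$ assumptions on $H$; this is standard and can be handled by first approximating $f$ by smooth compactly supported functions and, if necessary, localizing $H$ (or truncating $\pi$) before passing to the limit. All other steps are algebraic identities or elementary inequalities.
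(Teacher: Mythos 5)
Your proof is correct. Note, though, that the paper does not prove this lemma at all: it is stated verbatim as a classical result with a citation to Brascamp and Lieb (1976), so there is no in-paper argument to compare against. What you supply is the standard modern proof via the Helffer--Sj\"ostrand / Bochner (\(\Gamma_2\)-calculus) route: solve the Poisson equation \(-Lg = f - \E_\pi f\) for the weighted generator \(L = \Delta - \langle \nabla H, \nabla\rangle\), apply the \(\Gamma\)-weighted Cauchy--Schwarz inequality, and control \(\int \nabla g^\top \Gamma\, \nabla g\,\dd\pi\) by \(\int (Lg)^2\,\dd\pi = \Var_\pi(f)\) using the Bochner identity and \(\nabla^2 H \succeq \Gamma\); specializing to linear \(f\) gives the covariance bound. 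Every step checks out, and you correctly flag the two genuine technical points: regularity/integrability needed to justify the integrations by parts and the solvability of the Poisson equation (handled by the \Poincare{} inequality that strong convexity provides), and the fact that the lemma as stated is slightly sloppy when \(\Gamma\) is singular, which your \(\Gamma + \eps\id\) regularization addresses. This differs from Brascamp and Lieb's original argument (a dimensional induction combined with a one-dimensional variational computation), but the \(\Gamma_2\) proof is cleaner here because it yields the matrix-weighted functional inequality directly, which is exactly the form from which the covariance bound follows by taking \(f\) linear; the only thing the original approach buys that yours does not is the sharper pointwise weight \((\nabla^2 H)^{-1}\) in place of the uniform \(\Gamma^{-1}\), which the paper's application does not need.
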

Recall that we have assumed $\mu$ is log-concave. 
As a consequence, the posterior distribution of $U$ given ${\gamma} U + \sqrt{\gamma}Z = x$ is also log-concave. 
In fact, the logarithmic of the posterior density is $\gamma$-strongly concave, regardless of the value of $x$. 
Using \cref{lemma:brascamp-lieb}, we conclude that  
\begin{align*}
	\frac{g''(x)}{g'(x)} - \left( \frac{g'(x)}{g(x)} \right)^2 = \Var\left[ U \mid {\gamma} U + \sqrt{\gamma} Z =  x \right] \leq \gamma^{-1}. 
\end{align*}
%
%
Recall that by assumption $\mu$ is symmetric, combining this assumption with the expression in \cref{eq:conditional-e-and-v}, we conclude that $g'(0) / g(0) = 0$. 
In addition, we have shown that 
\begin{align*}
	\frac{\dd}{\dd x} \left( \frac{g'(x)}{g(x)} \right) = \frac{g''(x)}{g'(x)} - \left( \frac{g'(x)}{g(x)} \right)^2 \in [0, \gamma^{-1}]. 
\end{align*}
Putting together these two parts, we conclude that $(g'(x) / g(x))^2 \leq \gamma^{-2}x^2$ for all $x \in \RR$.
Substituting the upper bounds we have derived into \cref{eq:V-gamma-pp}, we get 
		\begin{align}\label{eq:V''-bound}
			V_{\gamma}''(x) \geq -\gamma^{-1} - \gamma^{-2} x^2p(x) (1 - p(x)) / 2. 
		\end{align}
		When $f_{\mu}(x) \geq c_1 e^{-c_2 x^{2k}}$, it holds that
		\begin{align*}
			g(x) \geq  \int_{\RR} c_1 e^{x \theta - \frac{\gamma}{2} \theta^2 - c_2 \theta^{2k}} \de \theta \geq \int_{\RR} c_1 e^{x \theta - \frac{\gamma^k \theta^{2k}}{2k} - \frac{k - 1}{2k} - c_2 \theta^{2k}} \de \theta, 
		\end{align*}
		where to obtain the second lower bound we leverage Young's inequality. Setting $\theta = x^{1 / (2k - 1)} u$, we further obtain that 
		\begin{align}\label{eq:g(x)-lower-bound}
			g(x) \geq & c_1 |x|^{1 / (2k - 1)}\int_{\RR} \exp \left(x^{2k / (2k - 1)} \cdot ( u - \gamma^{k}u^{2k} / 2k - c_2 u^{2k}) - (k - 1) / 2k    \right) \de u.  
		\end{align}
		Note that when $(c_2 + \gamma^{k} / 2k)^{-1 / (2k - 1)} / 3 \leq u \leq 2(c_2 + \gamma^{k} / 2k)^{-1 / (2k - 1)} / 3$, it holds that
		\begin{align}
		\label{eq:a-lower-bound}
		\begin{split}
			u - \gamma^{k}u^{2k} / 2k - c_2 u^{2k} = & u \cdot \left(1 - u^{2k-1} (c_2 + \gamma^k / 2k)\right) \\
			\geq & \left((c_2 + \gamma^{k} / 2k)^{-1 / (2k - 1)} / 3 \right) \cdot \left(1 - (2/3)^{2k - 1}\right) \\
			\geq & \left(c_2 + \gamma^{k} / 2k \right)^{-1 / (2k - 1)} / 9. 
		\end{split}
		\end{align}		
		%
		%
		Plugging \cref{eq:a-lower-bound} into \cref{eq:g(x)-lower-bound}, we conclude that  
		\begin{align}
		\label{eq:g-lower-bound}
			g(x) \geq \frac{c_1|x|^{1/(2k - 1)} e^{-(k - 1) / 2k}}{3 (c_2 + \gamma^{k} / 2k)^{1/(2k - 1)}} \cdot \exp \left( \frac{x^{2k / (2k - 1)}}{9\,(c_2 + \gamma^{k} / 2k)^{1 / (2k - 1)}} \right). 
		\end{align}
		Inspecting \cref{eq:g-lower-bound}, we see that when 
		\begin{align}\label{eq:x-lower-bound}
			|x| \geq 1 + 18^{\frac{2k - 1}{2k}} \cdot \big( c_2 + \gamma^k / 2k \big)^{\frac{1}{2k}} \cdot \left\{ 2\log \left( \frac{3(c_2 + \gamma^{k} / 2k)^{1/(2k - 1)}} {c_1 q e^{-(k - 1) / 2k}}\right)  \vee \log \frac{1 - q}{q} \vee 0 \right\}^{\frac{2k - 1}{2k}},
		\end{align}
		%
		%
		we have  
		\begin{align*}
			g(x) \geq   e(x) \geq \frac{1 - q}{q}, \qquad e(x) := \exp \left( \frac{x^{2k / (2k - 1)}}{18\,(c_2 + \gamma^{k} / 2k)^{1 / (2k - 1)}} \right)  
		\end{align*}
		As a result, for all $x$ that satisfies the lower bound given in \cref{eq:x-lower-bound}, it holds that $p(x) = qg(x) / (1 - q + q g(x)) \geq 1 / 2$, and  $p(x) (1 - p(x)) \leq q(1 - q) e(x) / (1 - q + q e(x))^2$. In addition, note that
		\begin{align*}
			x^2 = \left( 18\,(c_2 + \gamma^{k} / 2k)^{1 / (2k - 1)} \cdot  \log e(x) \right)^{\frac{2k - 1}{k}}. 
		\end{align*}
		Combining the above results and \cref{eq:V''-bound}, we arrive at the following lower bound: 
		\begin{align*}
			\inf_{x \in \RR} V_{\gamma}''(x) \geq -\gamma^{-1} - 18^{\frac{2k - 1}{k}} \gamma^{-2} \left( c_2 + \gamma^{k} / 2k  \right)^{\frac{1}{k}} \cdot \left( \log e(x) \right)^{\frac{2k - 1}{k}} \cdot \frac{q(1 - q) e(x)}{(1 - q + qe(x))^2}
		\end{align*}
		%
		%
		By its definition, for all $x \in \RR$ we have $e(x) \geq 1$. 
		Furthermore, 
		 $\sup_{e \geq 1} (\log e)^{(2k - 1) / k} \cdot q(1 - q) e / (1 - q + q e)^2 < \infty$, and the maximum value is a function of $q$ only. 
		 As a consequence, we conclude that for all $|x|$ exceeding the lower bound given in \cref{eq:x-lower-bound}, there exists a constant $C_0 > 0$ depending only on $(\mu, q)$, such that $ \inf_{x \in \RR} V_{\gamma}''(x) \geq -C_0 (\gamma^{-1} + \gamma^{-2})$. On the other hand, for all $x$ that does not satisfy \cref{eq:x-lower-bound}, plugging the upper bound for $|x|$ into \cref{eq:V''-bound} gives 
		\begin{align*}
			& \inf_{x \in \RR} V_{\gamma}''(x) \\
			\geq & -\gamma^{-1} - \gamma^{-2} \cdot \left(1 +  18^{\frac{2k - 1}{k}}  (c_2 + \gamma^{k} / 2k)^{\frac{1}{k}} \cdot \left\{ 2\log \Big( \frac{3(c_2 + \gamma^{k} / 2k)^{1/(2k - 1)}} {c_1 q e^{-(k - 1) / 2k}}\right)  \vee \log \frac{1 - q}{q} \vee 0 \right\}^{\frac{2k - 1}{k}} \Big). 
		\end{align*}
		In the above display, note that the constant in the parentheses that follows $\gamma^{-2}$ depends only on $(\mu, q)$.
		Therefore, there exists a constant $C_0 > 0$ that is a function of $(\mu, q)$ only, such that for all $x$ that does not satisfy \cref{eq:x-lower-bound}, it holds that $ \inf_{x \in \RR}V_{\gamma}''(x) \geq -C_0(\gamma^{-1} + \gamma^{-2}) \cdot (1 + \log (\gamma + 1))^{\frac{2k - 1}{k}}$. The proof is complete.

\end{proof}

\section{Diagnostics}
\label{sec:diagnostics}

We present in this section diagnostic plots that guide the parameter selection for MALA and HMC. 
\subsection{Setting I, MALA}
\label{sec:diagnostics1}

For the MALA implementation under setting I, we take the number of burn-in steps $B$ to be $10^4$ for $\rho \in \{0, 0.3, 0.6\}$ and $B = 2 \times 10^4$ for $\rho = 0.9$. 
By looking at the trace plots for $\bvarphi$ under different choices of $\rho$, we see that these numbers are sufficient for MALA to mix. 
As for the MALA step size, we take $\tau = 0.2$, which results in acceptance rates between 20\% and 50\% for all $\rho$ between 0.0 and 0.9. We present the diagnostic plots in Figure \ref{fig:setting1-MALA-diagnostics}.  

\begin{figure}[ht]
    \centering
     \begin{minipage}{0.31\textwidth}
        \centering
        \includegraphics[width=\textwidth]{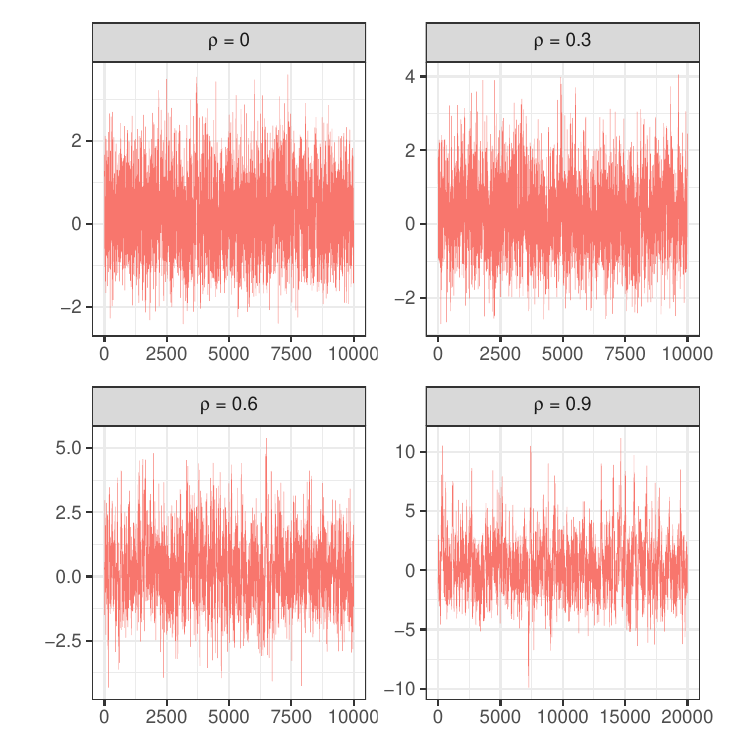} 
    \end{minipage} \hfill 
    \begin{minipage}{0.27\textwidth}
        \centering
        \includegraphics[width=\textwidth]{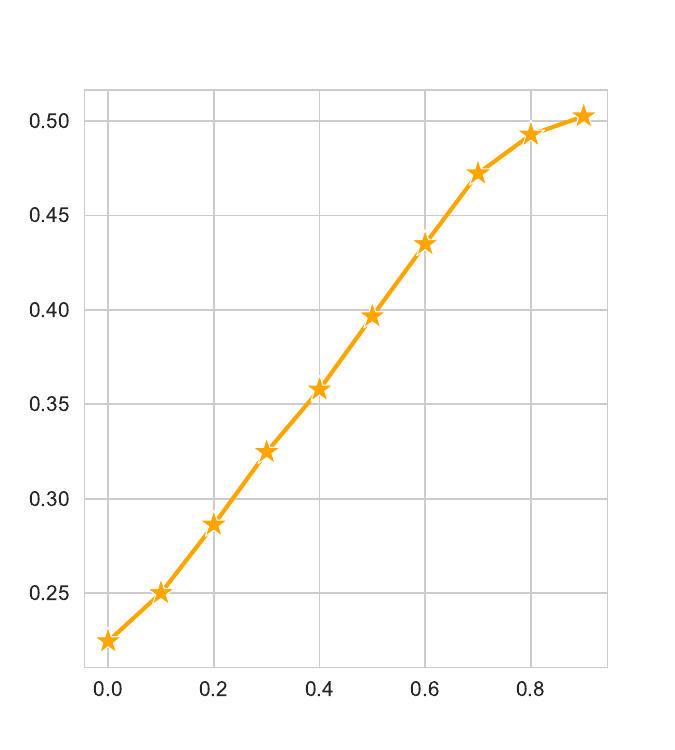} 
    \end{minipage} \hfill 
    \begin{minipage}{0.4\textwidth}
        \centering
        \includegraphics[width=\textwidth]{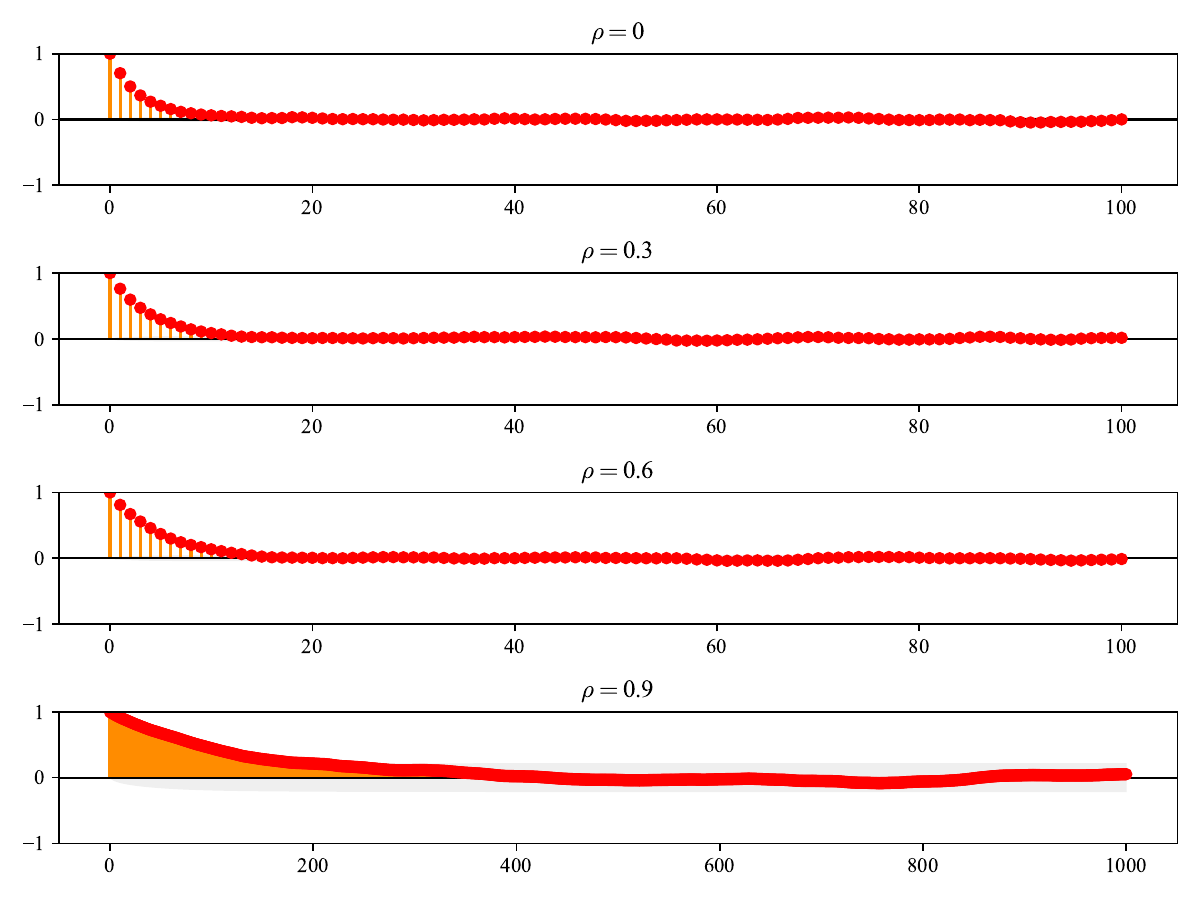}        
    \end{minipage}
    \caption{Diagnostic plots for MALA. Data is generated according to setting 1, and the sampling algorithm follows that stated in \cref{sec:diagnostics1}. Left panel: trace plots for a randomly selected coordinate in a single realization, for $\rho \in \{0, 0.3, 0.6, 0.9\}$. Middle panel: MALA acceptance rate for $\rho$ between $0$ and $0.9$. Right panel: autocorrelation plots for $\rho \in \{0, 0.3, 0.6, 0.9\}$.  }
    \label{fig:setting1-MALA-diagnostics}
\end{figure}

\subsection{Setting I, HMC}
\label{sec:diagnostics2}

To implement HMC under setting I, we set $\bOmega = \id_d$, $\epsilon = 0.4$ and $\ell = 10$. Once again, we take $B = 10^4$ for $\rho \in \{0, 0.3, 0.6\}$ and $B = 2 \times 10^4$ for $\rho = 0.9$. The diagnostic plots can be found in Figure \ref{fig:setting1-HMC-diagnostics}. 
 Note that in practice, HMC typically achieves a higher acceptance rate than MALA, with the desired acceptance rate ranging between 80\% and 99\%. 
\begin{figure}[ht]
    \centering
     \begin{minipage}{0.31\textwidth}
        \centering
        \includegraphics[width=\textwidth]{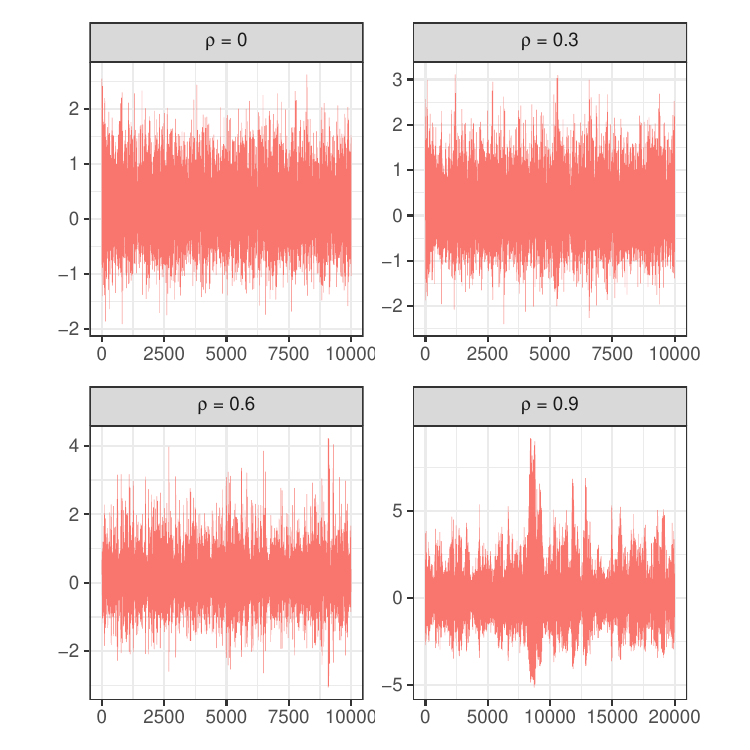} 
    \end{minipage} \hfill 
    \begin{minipage}{0.27\textwidth}
        \centering
        \includegraphics[width=\textwidth]{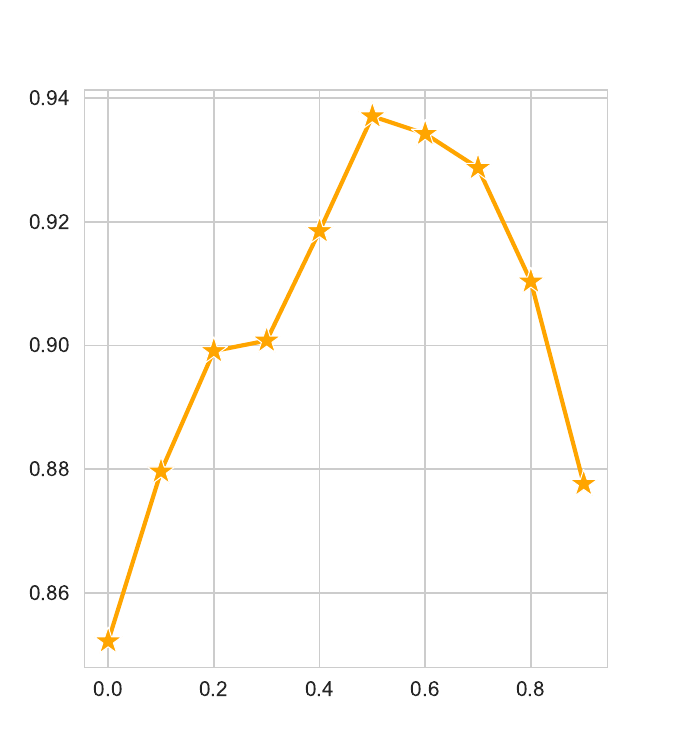} 
    \end{minipage} \hfill 
    \begin{minipage}{0.4\textwidth}
        \centering
        \includegraphics[width=\textwidth]{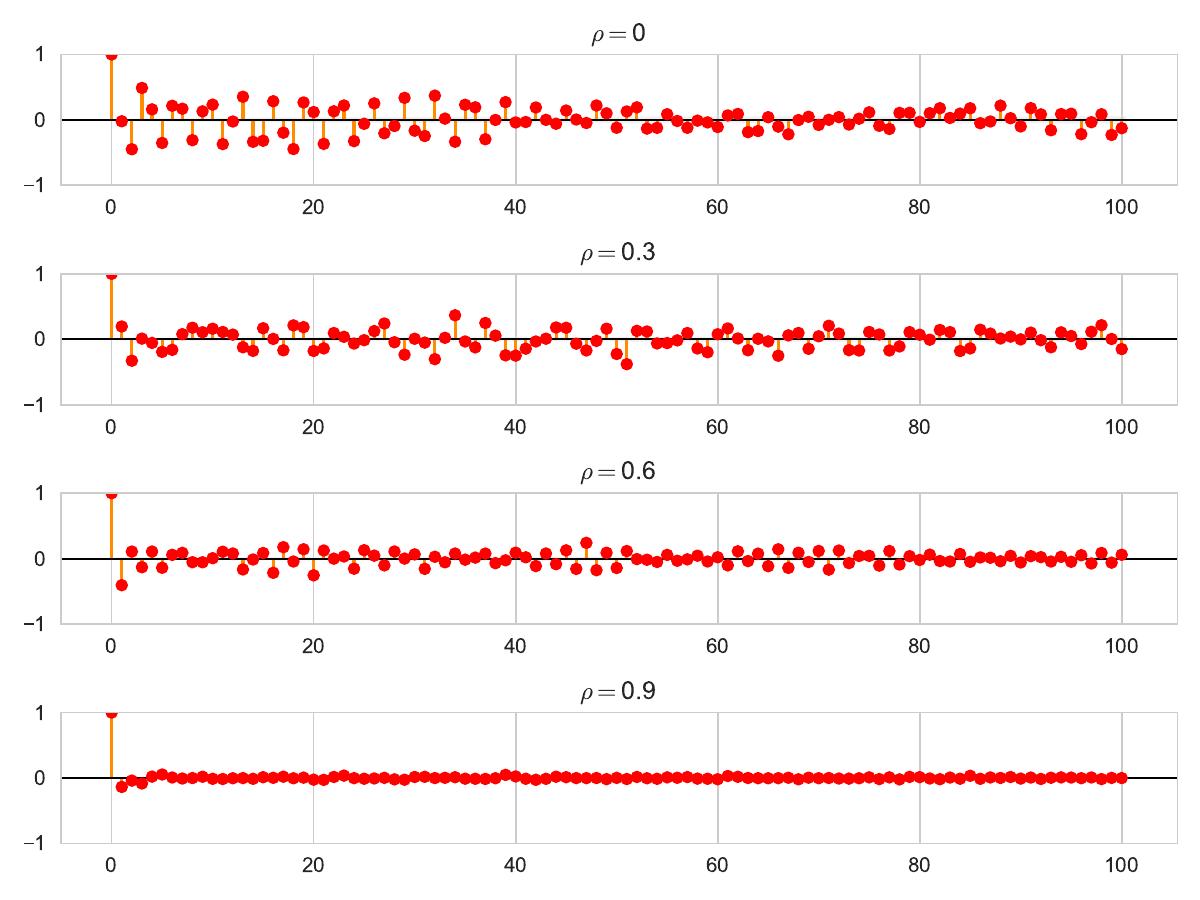}        
    \end{minipage}
    \caption{Diagnostic plots for HMC. Data is generated following setting 1, and the HMC parameters follow that stated in \cref{sec:diagnostics2}. Left panel: trace plots for a randomly selected coordinate in a single realization, for $\rho \in \{0, 0.3, 0.6, 0.9\}$. Middle panel: HMC acceptance rate for $\rho$ between $0$ and $0.9$. Right panel: autocorrelation plots for $\rho \in \{0, 0.3, 0.6, 0.9\}$. }
    \label{fig:setting1-HMC-diagnostics}
\end{figure}

\subsection{Setting II, MALA}
\label{sec:diagnostics3}

We then switch to setting II. For MALA, we take $\tau = 0.2$, $B = 10^4$ for $\rho \in \{0, 0.3, 0.6\}$ and $B = 2 \times 10^4$ for $\rho = 0.9$. We collect the diagnostic plots in Figure \ref{fig:setting2-MALA-diagnostics}. 

\begin{figure}[ht]
    \centering
     \begin{minipage}{0.31\textwidth}
        \centering
        \includegraphics[width=\textwidth]{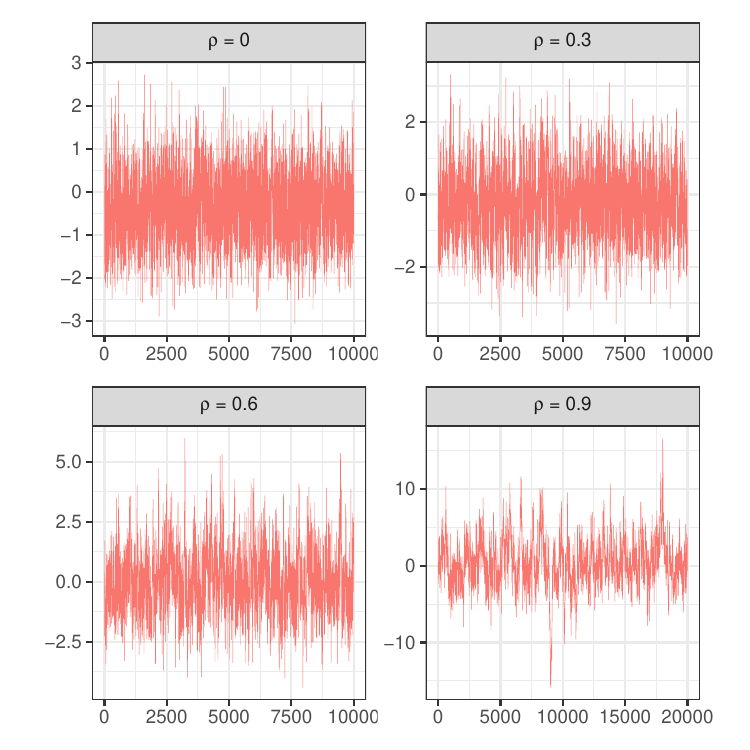} 
    \end{minipage} \hfill 
    \begin{minipage}{0.27\textwidth}
        \centering
        \includegraphics[width=\textwidth]{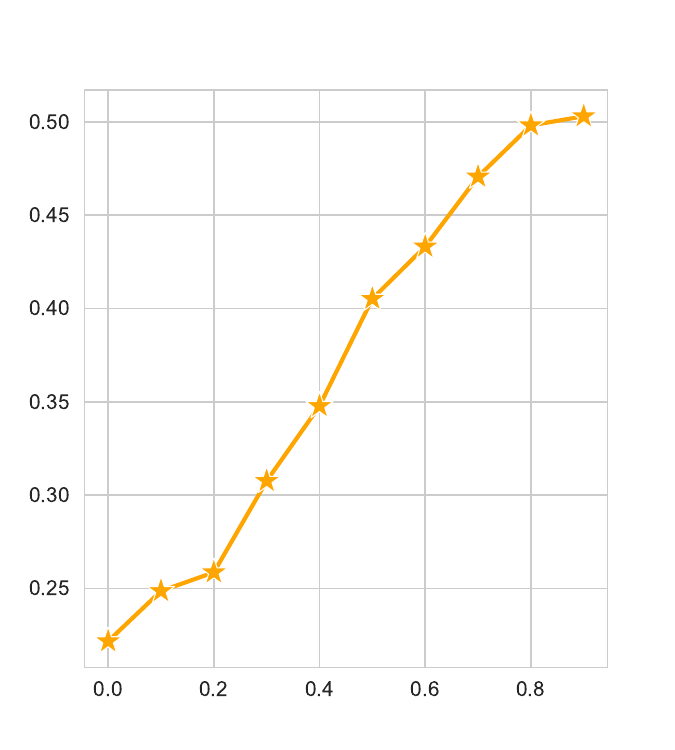} 
    \end{minipage} \hfill 
    \begin{minipage}{0.4\textwidth}
        \centering
        \includegraphics[width=\textwidth]{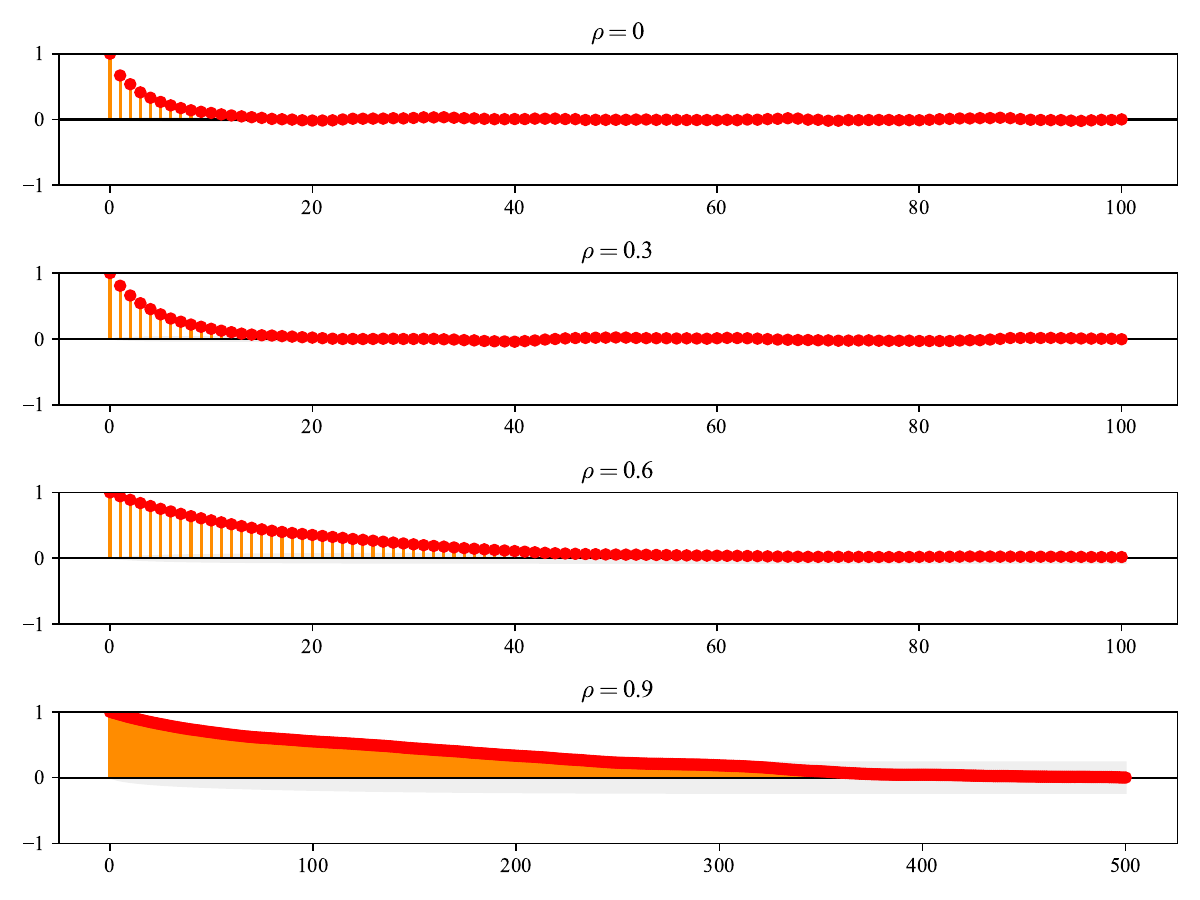}        
    \end{minipage}
    \caption{Diagnostic plots for MALA under setting II. Left panel: trace plots for a randomly selected coordinate in a single realization, for $\rho \in \{0, 0.3, 0.6, 0.9\}$. Middle panel: MALA acceptance rate for $\rho$ between $0$ and $0.9$. Right panel: autocorrelation plots for $\rho \in \{0, 0.3, 0.6, 0.9\}$. }
    \label{fig:setting2-MALA-diagnostics}
\end{figure}

\subsection{Setting II, HMC}
\label{sec:diagnostics4}

Finally, we tune the parameters for HMC algorithm under setting II. We take $\bOmega = \id_d$, $\eps = 0.5$, and $\ell = 10$. We set $B = 10^4$ for $\rho \in \{0, 0.3, 0.6\}$ and $B = 2 \times 10^4$ for $\rho = 0.9$. The diagnostic plots are presented in Figure \ref{fig:setting2-HMC-diagnostics}. 

\begin{figure}[ht]
    \centering
     \begin{minipage}{0.31\textwidth}
        \centering
        \includegraphics[width=\textwidth]{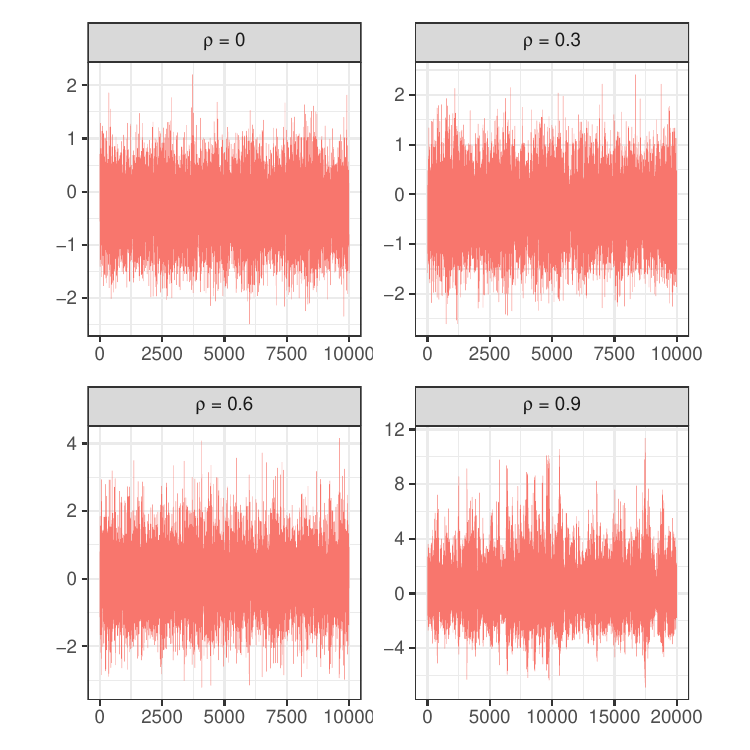} 
    \end{minipage} \hfill 
    \begin{minipage}{0.27\textwidth}
        \centering
        \includegraphics[width=\textwidth]{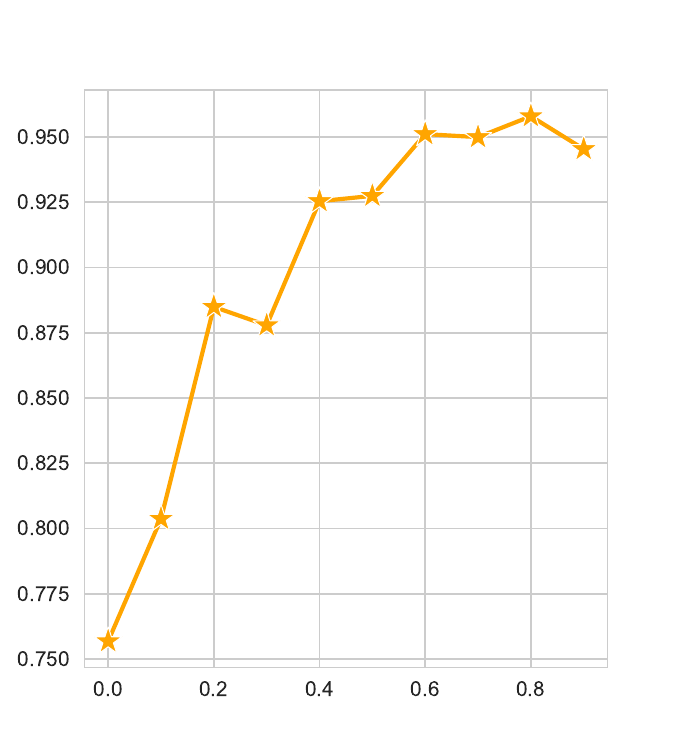} 
    \end{minipage} \hfill 
    \begin{minipage}{0.4\textwidth}
        \centering
        \includegraphics[width=\textwidth]{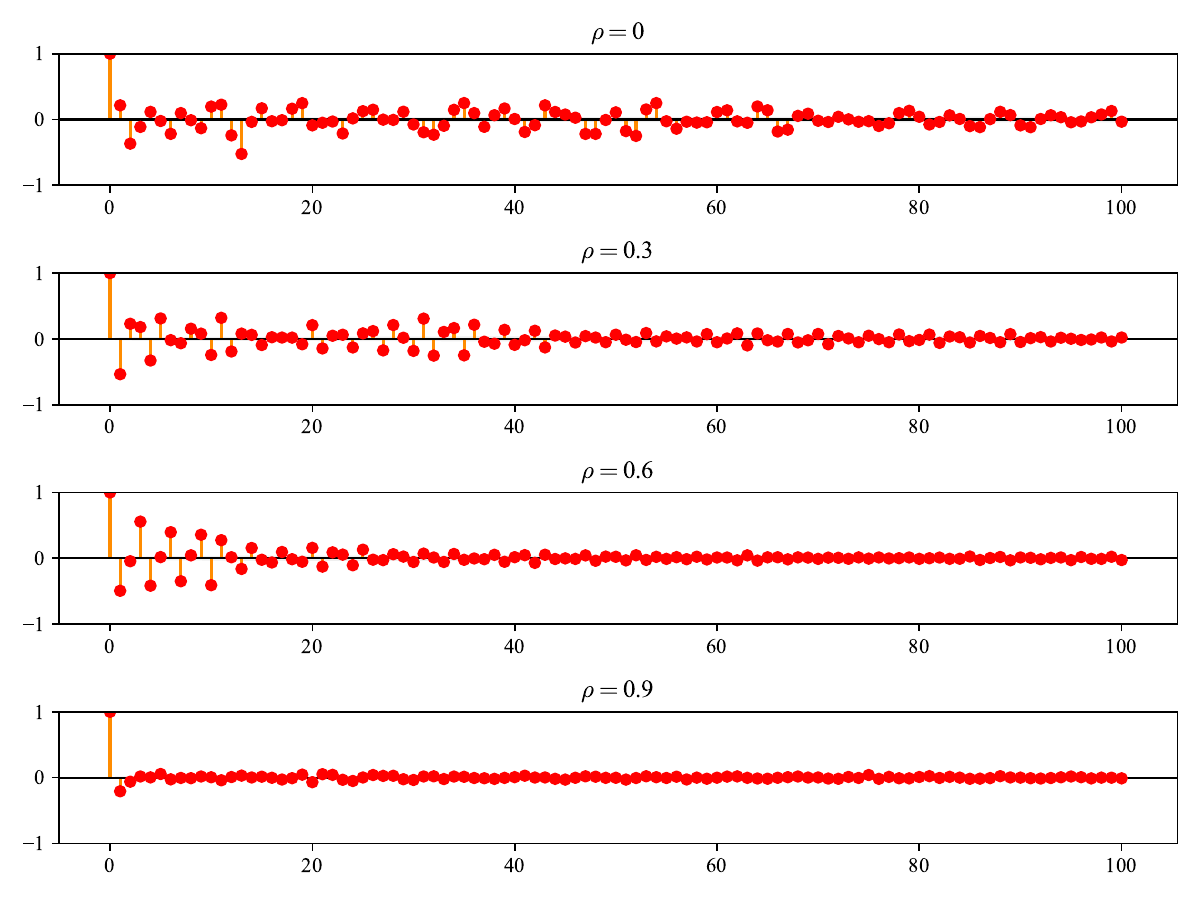}        
    \end{minipage}
    \caption{Diagnostic plots for HMC under setting II. Left panel: trace plots for a randomly selected coordinate in a single realization, for $\rho \in \{0, 0.3, 0.6, 0.9\}$. Middle panel: HMC acceptance rate for $\rho$ between $0$ and $0.9$. Right panel: autocorrelation plots for $\rho \in \{0, 0.3, 0.6, 0.9\}$. }
    \label{fig:setting2-HMC-diagnostics}
\end{figure}

\section{Proofs for random designs}
\label{sec:proof-random-design}

\subsection{Proof of \cref{thm:feasible}}
\label{sec:proof-thm:feasible}

We prove \cref{thm:feasible} in this section. 
To this end, we apply the matrix deviation inequality from \cite[Section 9.1]{vershynin2018high}. 
We copy this inequality below for readers' convenience.
\begin{lemma}[Matrix deviation inequality]
\label{lemma:matrix-deviation-inequality}
	Let $\bX$ be an $n \times d$ matrix whose rows $\bx_i$ are independent, isotropic, and sub-Gaussian random vectors in $\RR^d$. 
	We also assume that $K = \max_i \|\bx_i\|_{\psi_2}$.
	Then, for any subset $T \subseteq \RR^d$ and any $u \geq 0$, the event
	\begin{align*}
		\sup_{\ba \in T} \Big| \|\bX \ba\|_2 - \sqrt{n}\|\ba\|_2 \Big| \leq c K^2 (w(T) + u\, \mathrm{rad}(T))
	\end{align*} 
	holds with probability at least $1 - 2 \exp(-u^2)$. Here, $c$ is a positive numerical constant, and
	\begin{align*}
		\mathrm{rad}(T) = \sup_{\ba \in T} \|\ba\|_2, \qquad w(T) = \EE_{\bg \sim \normal(\mathbf{0}_d, \id_d)}\big[\sup_{\ba \in T} \langle \bg, \ba \rangle\big]. 
	\end{align*}
\end{lemma}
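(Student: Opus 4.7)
The plan is to combine high-probability control of the extremal eigenvalues of $\bX^{\top}\bX$ (via \cref{lemma:matrix-deviation-inequality}) with the bound on $\inf_x V_\gamma''(x)$ from \cref{lemma:V-gamma-general}, and then verify the sufficient condition \eqref{eq:convex-condition} for a suitably chosen $\gamma$. The reason this should work is that the obstacle to feasibility is the "spectral gap" $\|\bX\|_{\op}^2 - \lambda_{\min}(\bX^{\top}\bX)$: choosing $\gamma$ just above $\sigma_d^{-2}\|\bX\|_{\op}^2$, the left-hand side of \eqref{eq:def-feasible} is $\Omega(1/\text{gap})$, and sub-Gaussian concentration makes this gap only $O(K^2\sqrt{nd})$ rather than the $O(n)$ scale of the eigenvalues themselves.

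First, I apply \cref{lemma:matrix-deviation-inequality} to the unit sphere $T = S^{d-1}$, for which $\mathrm{rad}(T)=1$ and $w(T) = \EE\|\bg\|_2 \leq \sqrt{d}$. Setting $u = \sqrt{d}$ gives, with probability at least $1 - 2\exp(-d)$,
$$\sup_{\ba \in S^{d-1}}\bigl|\|\bX \ba\|_2 - \sqrt{n}\bigr| \leq 2cK^2\sqrt{d},$$
which after squaring yields $(\sqrt{n}-2cK^2\sqrt{d})^2 \leq \lambda_{\min}(\bX^\top\bX) \leq \|\bX\|_{\op}^2 \leq (\sqrt{n}+2cK^2\sqrt{d})^2$. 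With $C_1$ chosen to absorb $c$, the hypothesis $n/d \geq 4C_1 K^4$ ensures $\sqrt{n} \geq 4cK^2\sqrt{d}$, so in particular $\lambda_{\min}(\bX^\top\bX) \geq n/4 > 0$ and the spectral gap satisfies
$$\|\bX\|_{\op}^2 - \lambda_{\min}(\bX^\top\bX) \leq 8cK^2\sqrt{nd}.$$

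Next, I set $\gamma := \sigma_d^{-2}\|\bX\|_{\op}^2 + \eta$ and send $\eta \downarrow 0$, which secures $\gamma > \sigma_d^{-2}\|\bX\|_{\op}^2$. Combining the spectral-gap bound with $\sigma_d^{-2} < 1/(c_2 d)$ yields
$$\gamma - \sigma_d^{-2}\lambda_{\min}(\bX^\top\bX) \leq \frac{8cK^2\sqrt{nd}}{c_2 d} + \eta = \frac{8c}{c_2}K^2\sqrt{n/d} + \eta.$$
In the other direction, $\gamma \geq \sigma_d^{-2}(\sqrt{n}/2)^2 \geq (n/d)/(4c_1)$ by $\sigma_d^{-2} > 1/(c_1 d)$, so $\gamma^{-1}+\gamma^{-2} \leq 4c_1(d/n)+16c_1^2(d/n)^2$ and $\log(\gamma+1) \leq \log(n/d+1)+O(1)$. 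Plugging into \cref{lemma:V-gamma-general} gives
$$-\inf_{x\in\RR} V_\gamma''(x) \leq C_0(\gamma^{-1}+\gamma^{-2})(1+\log(\gamma+1))^{(2k-1)/k} \lesssim (d/n + d^2/n^2)(1+\log(n/d+1))^{(2k-1)/k}.$$

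Letting $\eta \downarrow 0$, the sufficient condition \eqref{eq:convex-condition} reduces to
$$\frac{c_2}{8cK^2}\sqrt{d/n} \;\geq\; C_0'\,(d/n + d^2/n^2)(1+\log(n/d+1))^{(2k-1)/k},$$
which is precisely the second hypothesis of the theorem after bundling the numerical constants from \cref{lemma:matrix-deviation-inequality}, \cref{lemma:V-gamma-general}, and the bounds $c_2 < \sigma_d^2/d < c_1$ into a single constant $C_1$. I expect the main obstacle to be purely bookkeeping: the choice of $\gamma$ must simultaneously (i) exceed $\sigma_d^{-2}\|\bX\|_{\op}^2$ so that $\bA \succ 0$, (ii) be large enough that the $\gamma^{-1}+\gamma^{-2}$ factor in $V_\gamma''$ is small, and (iii) make the difference $\gamma - \sigma_d^{-2}\lambda_{\min}(\bX^\top\bX)$ small relative to that factor. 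Choosing $\gamma$ at the boundary $\sigma_d^{-2}\|\bX\|_{\op}^2$ and absorbing all constants into $C_1$ handles all three simultaneously and matches the stated hypotheses.
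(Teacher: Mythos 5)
There is a fundamental mismatch here: the statement you were asked to prove is \cref{lemma:matrix-deviation-inequality} itself --- the matrix deviation inequality --- but your proposal takes that lemma as given and instead proves \cref{thm:feasible}. Your argument (control the extreme eigenvalues of $\bX^{\top}\bX$ by applying the deviation inequality on the unit sphere, take $\gamma$ just above $\sigma_d^{-2}\|\bX\|_{\op}^2$, and compare the resulting bound on $1/(\gamma - \lambda_{\min}(\sigma_d^{-2}\bX^{\top}\bX))$ against the estimate of $\inf_x V_{\gamma}''(x)$ from \cref{lemma:V-gamma-general} to verify \eqref{eq:convex-condition}) is a faithful reconstruction of the paper's proof of \cref{thm:feasible}, but it is a proof of the wrong statement: it presupposes exactly the concentration result you were supposed to establish.

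As for the lemma itself, the paper does not prove it either; it is quoted from Section 9.1 of \cite{vershynin2018high} as a known result. An actual proof is substantially harder than anything in your write-up: one must show that the process $Z_{\ba} := \|\bX\ba\|_2 - \sqrt{n}\|\ba\|_2$ has sub-Gaussian increments, $\|Z_{\ba} - Z_{\bb}\|_{\psi_2} \leq C K^2 \|\ba - \bb\|_2$, which itself requires a careful Bernstein-type analysis of $\|\bX\ba\|_2^2 - \|\bX\bb\|_2^2$ as a sum of independent sub-exponential terms, and then invoke Talagrand's comparison inequality (the majorizing measure theorem in its high-probability form) to bound $\sup_{\ba \in T} |Z_{\ba}|$ by $cK^2\bigl(w(T) + u\,\mathrm{rad}(T)\bigr)$. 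None of these ingredients --- the increment bound, the chaining/comparison step, or the source of the Gaussian width $w(T)$ for an \emph{arbitrary} set $T$ --- appears in your proposal. If your intent was to treat the lemma as a black-box citation, as the paper does, you should state that explicitly rather than substitute a proof of the downstream theorem.
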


\begin{proof}[Proof of \cref{thm:feasible}]

Next, we apply \cref{lemma:matrix-deviation-inequality} to prove \cref{thm:feasible}. 
To this end, we define $T = \{\ba \in \RR^d: \|\ba\|_2 = 1\}$. We then see that $\mathrm{rad}(T) = 1$ and $w(T) = \EE[\|\bg\|_2] \leq d^{1/2}$.  
Setting $u = r\sqrt{d}$ in \cref{lemma:matrix-deviation-inequality} for some $r \geq 0$, we obtain that with probability at least $1 - 2 \exp({-dr^2})$, 
\begin{align}
\label{eq:Xan}
	\sup_{\|\ba\|_2 = 1} \Big| \|\bX \ba \|_2 - \sqrt{n} \Big| \leq cK^2 \sqrt{d} \, (r + 1). 
\end{align}
Recall that by assumption $n / d \geq C_1 K^4 (r + 1)^2 $. 
As a consequence of that assumption and \cref{eq:Xan}, we conclude that for a large enough $C_1$, 
\begin{align}
\label{eq:lambdamaxmin}
\begin{split}
	& \lambda_{\max}(\sigma_d^{-2} \bX^{\top} \bX) \in \Big[ \sigma_d^{-2} (\sqrt{n} - c K^2 \sqrt{d} \, (r + 1))^2, \,  \sigma_d^{-2} (\sqrt{n} + c K^2 \sqrt{d} \, (r + 1))^2 \Big], \\
	& \lambda_{\min}(\sigma_d^{-2} \bX^{\top} \bX) \in \Big[ \sigma_d^{-2} (\sqrt{n} - c K^2 \sqrt{d} \, (r + 1))^2, \,  \sigma_d^{-2} (\sqrt{n} + c K^2 \sqrt{d} \, (r + 1))^2 \Big]. 
\end{split}
\end{align}
%
Furthermore, via choosing a large enough $C_1$, \cref{eq:lambdamaxmin} implies the following: 
\begin{align}
\label{eq:24}
	\lambda_{\max}(\sigma_d^{-2} \bX^{\top} \bX), \lambda_{\min}(\sigma_d^{-2} \bX^{\top} \bX) \in \big[\sigma_d^{-2} n / 2, \, 2 \sigma_d^{-2} n\big]. 
\end{align}
Taking $\gamma = \lambda_{\max}(\sigma_d^{-2} \bX^{\top} \bX) + K^2 \sigma_d^{-2}$, we obtain that  
\begin{align}
\label{eq:25}
\begin{split}
	\frac{1}{\gamma - \lambda_{\min} (\sigma_d^{-2} \bX^{\top} \bX)} \geq & \frac{\sigma_d^2}{(\sqrt{n} + cK^2 \sqrt{d} \, (r + 1))^2 - (\sqrt{n} - cK^2 \sqrt{d} \, (r + 1))^2 + K^2} \\
	= & \frac{\sigma_d^2}{2cK^2 \sqrt{nd} (r + 1) + K^2} \geq  \frac{c_3 d}{ K^2 \sqrt{nd} (r + 1)},
\end{split}   
\end{align}
where $c_3$ is a positive numerical constant. 
To obtain the second lower bound above, we use the assumption that $c_1 > \sigma_d^2 / d > c_2$ for positive numerical constants $c_1$ and $c_2$.
 
By \cref{lemma:V-gamma-general}, we know that 
\begin{align*}
	\inf_{x \in \RR} V_{\gamma}''(x) \geq - C_0 (\gamma^{-1} + \gamma^{-2}) \cdot (1 + \log (\gamma + 1))^{\frac{2k - 1}{k}},  
\end{align*}
where we recall that $k \in \NN_+$ is a function of $\mu$, and $C_0 > 0$ is a constant that depends only on $(q, \mu)$. 
Using \cref{eq:24} and the assumption that $c_1 > \sigma_d^2 / d > c_2$, we conclude that 
\begin{align*}
	\inf_{x \in \RR} V_{\gamma}''(x) \geq - C_0 \bar c_k (d / n + d^2 / n^2) \cdot (1 + \log (n / d + 1))^{\frac{2k - 1}{k}}
\end{align*}
for some positive constant $\bar c_k$ that depends only on $k$. 
Putting together the above lower bound and \cref{eq:25}, a sufficient condition for \cref{eq:def-feasible} to hold is 
\begin{align*}
	\frac{c_3 \sqrt{d}}{ K^2 \sqrt{n} (r + 1)} > C_0 \bar c_k (d / n + d^2 / n^2) \cdot (1 + \log (n / d + 1))^{\frac{2k - 1}{k}}. 
\end{align*}
The proof is complete by setting $r = 1$. 
\end{proof}

\subsection{Proof of  \cref{thm:asymp-feasible}}
\label{sec:proof-thm:asymp-feasible}

\begin{proof}[Proof of \cref{thm:asymp-feasible}]
	If \cref{eq:asymp-feasible} holds, then we choose $\gamma$ that satisfies both inequalities in \cref{eq:asymp-feasible}. 
	Invoking Bai-Yin's law, we conclude that 
	\begin{align}
	\label{eq:26}
		\lambda_{\max}(\sigma_d^{-2} \bX^{\top} \bX) \overset{a.s.}{\to} \sigma_0^{-2} \delta (1 + 1 / \sqrt{\delta})^2, \qquad \lambda_{\min}(\sigma_d^{-2} \bX^{\top} \bX) \overset{a.s.}{\to} \sigma_0^{-2} \delta (1 - 1 / \sqrt{\delta})^2\mathbbm{1}\{\delta \geq 1\}. 
	\end{align}
	Therefore, with probability $1 - o_n(1)$ \cref{eq:def-feasible} holds. In this case the problem is feasible.
	
	On the other hand, if for all $\gamma \geq {\delta(1 + 1 / \sqrt{\delta})^2}{ \sigma_0^{-2}}$, it holds that 
	\begin{align*}
		\frac{1}{\gamma - \delta \sigma_0^{-2}(1 - 1 / \sqrt{\delta})^2 \mathbbm{1}\{\delta \geq 1\}} <- \inf_{x \in \RR} V_{\gamma}''(x).  
	\end{align*}
	By \cref{eq:26}, we know that for all $\gamma > \lambda_{\max}( \sigma_d^{-2} \bX^{\top} \bX)$, it must be the case that $\gamma > \sigma_0^{-2} \delta (1 + 1 / \sqrt{\delta})^2 + o_P(1)$. 
	For such $\gamma$,  
	\begin{align*}
		\frac{1}{\gamma - \lambda_{\min}(\sigma_d^{-2} \bX^{\top} \bX)} = &\frac{1}{\gamma - \delta \sigma_0^{-2}(1 - 1 / \sqrt{\delta})^2 \mathbbm{1}\{\delta \geq 1\}} + o_P(1). 
	\end{align*}
	By continuity, 
	the above equation is with probability $1 - o_n(1)$ strictly smaller than $-\inf_{x \in \RR} V_{\gamma}''(x)$. Therefore, the problem is with probability $1 - o_n(1)$ not feasible.

\end{proof}

\section{Additional simulation details}

We present the pseudo code for HMC in this section  

\begin{algorithm}
\caption{Hamiltonian Monte Carlo (HMC)}\label{alg:HMC}
\begin{algorithmic}[1]
\REQUIRE mass matrix $\bOmega$, leapfrog stepsize $\epsilon$, number of leapfrog steps $\ell$, Monte Carlo steps $K$;
\STATE Get an estimate of $\bvarphi^{\ast}$ via gradient ascent, and denote it by $\widehat\bvarphi^{\ast}$; 
\STATE Initialize HMC at $\bvarphi_0 \sim \normal(\hat\bvarphi^{\ast}, L^{-1} \id_d)$, where $L = 10$;
\FOR{$k = 1, 2, \cdots, K$}
	\STATE $\mathtt{Proceed} \gets \mathtt{False}$;
	\WHILE{$\mathtt{Proceed} = \mathtt{False}$}
	\STATE $\brho \sim \normal(\mathbf{0}, \bOmega)$, $\bvarphi \gets \bvarphi_{k - 1}$;
	\FOR{$i = 1, 2, \cdots, \ell$}
		\STATE $\brho \gets \brho - \epsilon \cdot \nabla H (\bvarphi) / 2$; 
		\STATE $\bvarphi \gets \bvarphi + \epsilon \cdot   \bOmega^{-1} \brho $; 
		\STATE $\brho \gets \brho - \epsilon \cdot \nabla H (\bvarphi) / 2$;
	\ENDFOR
	\STATE $\alpha \gets \min \{0, -H(\bvarphi) + H(\bvarphi_{k - 1}) - \bvarphi^{\top} \bOmega^{-1} \bvarphi / 2 + \bvarphi_{k - 1}^{\top} \bOmega^{-1} \bvarphi_{k - 1} / 2 \}$;
	\STATE Sample $u \sim \Unif[0,1]$;
	\IF{$u \leq e^\alpha$}
		\STATE $\mathtt{Proceed} \gets \mathtt{True}$; 
		\STATE $\bvarphi_k \gets \bvarphi$;
	\ENDIF
	\ENDWHILE
\ENDFOR
\end{algorithmic}
{\bf Return}: $\{\bvarphi_k: k \in [K]\}$.
\end{algorithm}

\end{appendices}

\end{document}